\newcommand{\V}{\mathcal{V}}
\newcommand{\B}{\mathcal{B}}
\DeclareMathOperator{\sgn}{sgn}
\title{Partisan Confidence Model for Group Polarization}
\author{
    Armineh~Rahmanian\\
    Dept. of Electrical \& Computer Engineering\\
    Tarbiat Modares University\\
    Tehran, Iran\\
    \texttt{armineh.rahmanian@modares.ac.ir}
\And
    Sadegh~Bolouki\\
    Dept. of Electrical \& Computer Engineering\\
    Tarbiat Modares University\\
    Tehran, Iran\\
    \texttt{bolouki@modares.ac.ir}
\And
    S.~Rasoul~Etesami\\
    Dept. of Industrial \& Enterprise\\
    Systems Engineering\\
    University of Illinois at Urbana-Champaign\\
    Urbana, IL, USA\\
    \texttt{etesami1@illinois.edu}
\And
    Abolfazl~Mohebbi\\
    Dept. of Mechanical Engineering\\
    Polytechnique Montr\'eal\\
    Montreal, QC, Canada\\
    \texttt{abolfazl.mohebbi@polymtl.ca}
}
\theoremstyle{definition}
\newtheorem{lemma}{Lemma}
\newtheorem{theorem}{Theorem}
\newtheorem{proposition}{Proposition}
\newtheorem{definition}{Definition}
\newtheorem{remark}{Remark}
\begin{document}
\maketitle
\begin{abstract}
    Models of opinion dynamics play a major role in various disciplines, including economics, political science, psychology, and social science, as they provide a framework for analysis and intervention. In spite of the numerous mathematical models of social learning proposed in the literature, only a few models have focused on or allow for the possibility of popular extreme beliefs' formation in a population. This paper closes this gap by introducing the Partisan Confidence (PC) model 
    inspired by the foundations of the well-established socio-psychological theory of groupthink. The model hints at the existence of a tipping point, passing which the opinions of the individuals within a so-called "social bubble" are exaggerated towards an extreme position, no matter how the general population is united or divided. The results are also justified through numerical experiments, which provide new insights into the evolution of opinions and the groupthink phenomenon.
\end{abstract}

\keywords{Opinion dynamics \and groupthink \and group polarization \and partisan confidence}

\section{Introduction}\label{Introduction}
    Opinion dynamics is an important area of research with a wide range of applications in political campaigning, marketing, transportation management, public opinion management \cite{dong2018survey}, and group recommender systems \cite{RecommenderSystems}. In particular, due to the rapid growth of online social networks and unprecedented ease of opinion exchange on these platforms, there has been growing interest in how individuals' opinions are formed and perceived within a population \cite{mastroeni}. An interesting phenomenon frequently observed on these platforms, yet largely rejected by the existing opinion dynamics models, is that extremist positions can emerge and become mainstream \cite{Green, ElusiveConsensus, Mask, BrexitGeographic}. 

    In fact, the vast majority of opinion dynamics models in the literature, e.g., \cite{noorazar2020classical} and references therein, have been focused on the notion of {\it conformity}, which broadly refers to the tendency of an individual to act so as to fit into a group by adopting or touting what he or she perceives as the popular opinion. It is thus expected, as happens in virtually every existing conformity-based model, that conformist individuals avoid extreme beliefs and agree on or hover around a middle-ground, compromise position as time grows. Various types of cognitive biases, most notably those of the confirmation bias type, have been incorporated into conformity-based models to make them more realistic and better capture the evolution of opinions \cite{hegselmann2002opinion, Adaptive, EmergenceOfExtremism, Cyber-SocialNetworks}. \textit{Confirmation bias} in a broad sense refers to the tendency of an individual to actively seek to confirm her preconceptions. This may be done by avoiding or limiting exposure to beliefs unlike hers or selective perception of the facts presented to her, among other means. The inclusion of the confirmation bias in conformity-based models has effectively broadened research on global agreement scenarios to more general scenarios in which one or multiple consensus clusters could be reached.

    Some attempts have been made to modify classical conformity-based models to ones where opinion polarization is not beyond the realms of possibility. A notable example is the Altafini model \cite{altafini2012dynamics, altafini2012consensus}, in which the notion of antagonism among individuals is incorporated, leading to an influx of models that consider antagonistic relationships in opinion networks. Among these, some models have also included {\it bounded confidence} and {\it biased assimilation}, each of which can be viewed as a type of confirmation bias, as an individual's characteristics in the dynamics \cite{InnerChoice, PolarizationSigned,BiasedAntagonism}. While antagonistic interactions/relationships can contribute to and justify the tendency toward more extreme beliefs in a divided population, such as that of the United States with respect to political ideology (see Fig.~\ref{US polarization}, extracted from \cite{PewResearchCenter2017}), extreme beliefs have also been observed to form in fully collaborative environments \cite{AbortionRights}. 
    Furthermore, the pre-existence of extreme tendencies is often required at the onset of evolution of opinions for these models to explain how such tendencies become mainstream.
    
    \begin{figure}[t]
        \centering 
        \includegraphics[width=.5\linewidth]
        {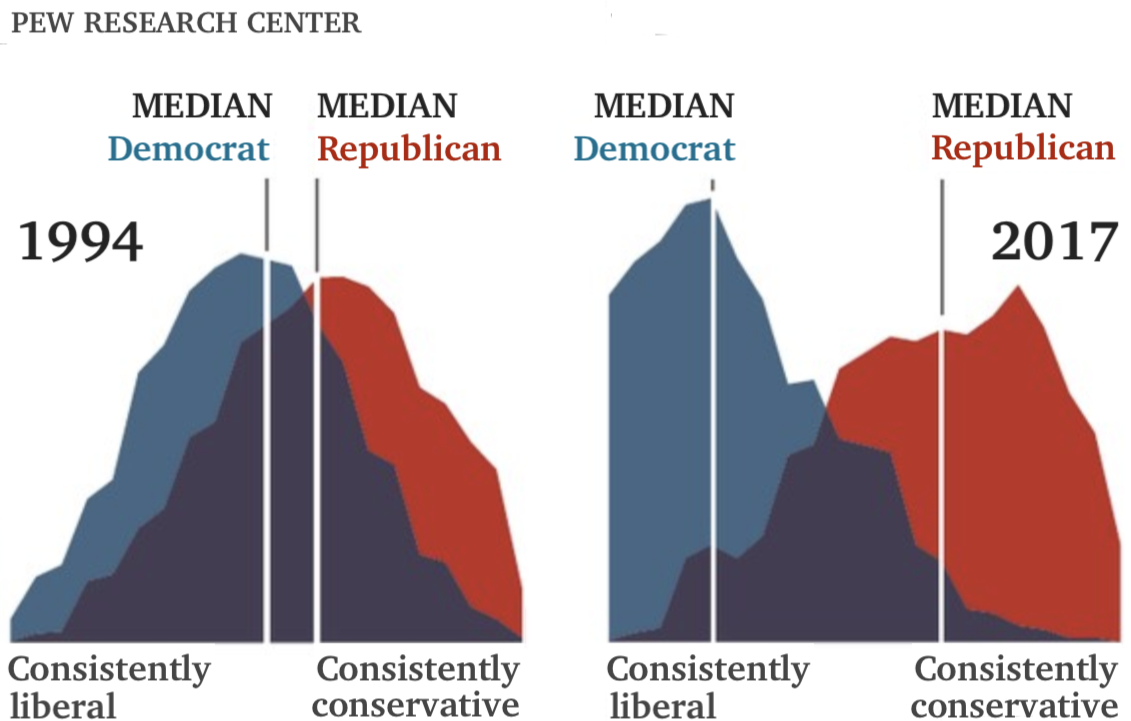}
        \caption{Growing ideological divide in the United States (Source: Pew Research Center \cite{PewResearchCenter2017}).}
        \label{US polarization}
    \end{figure}
    
    A handful of frameworks have been developed to address the emergence of popular extreme beliefs in fully collaborative environments where antagonistic relationships are absent. For instance, a model based on Persuasive Argument Theory (PAT) and homophily has been proposed in \cite{Differentiation, ArgumentBiasedAssimilation}, where it is suggested that those individuals who share similar viewpoints are more likely to engage in conversation with and have influence on each other. As another example, in \cite{BecomeExtremes}, a model has been introduced via the statistical physics modelling approach to explain the prevalence of extremism if stubborn extremists exist at the onset of the opinion evolution. Other remarkable examples are \cite{Dandekar2013, SocialBiases, AnalysisBiased}, where biased assimilation is incorporated into the Degroot's opinion averaging model \cite{degroot:74} to create the possibility of extreme beliefs emerging and becoming popular \cite{Dandekar2013, SocialBiases, AnalysisBiased}. 

    Our main objective in this work is to propose, conceptualize, and investigate a mathematical model inspired by the socio-psychological anlaysis of the groupthink phenomenon to study opinion formation. In particular, the model is capable of explaining the emergence of popular extreme beliefs in both united and divided populations. In the rest of section \ref{Introduction}, we provide more context to the problem of group polarization by further surveying the opinion dynamics literature and
    reviewing some of the basic concepts and notions from the groupthink theory and related models, before highlighting our contributions in this work and describing the organization of the paper.


\subsection{Further Background to Opinion Dynamics}
\label{Intro.RelatedWork}
    
    Agent-based models of opinion dynamics study the evolution of individuals' opinions through interactions between them. In that regard, DeGroot in \cite{degroot:74} provides one of the basic models, where an individual's opinion is updated to a weighted average of the neighboring individuals' opinions, including the individual himself/herself. The interested reader is referred to \cite{bullo2018lectures} for a detailed analysis of the DeGroot model and its straightforward extensions using tools from algebraic graph theory. A modified version of the DeGroot model, the Friedkin-Johnsen model \cite{friedkin2011social}, takes into account susceptibility to persuasion and the degree of individuals' stubbornness. Also, an alternative polar model \cite{Polar,Nematzadeh} suggests that the susceptibility to persuasion of each individual is based on his/her current opinion. In addition to cooperative relationships, there may also be antagonistic relationships among individuals in a balanced structure, 
    leading to a bipartite consensus in limit \cite{altafini2012dynamics,altafini2012consensus}, which could explain the shaping of a divided population. Sufficient conditions for bipartite consensus under time-varying collaborative and antagonistic relationships are studied in \cite{HostileCamps}.  
    
    
    The introduction of the Hegselmann-Kraus model  \cite{hegselmann2002opinion} has led to a class of nonlinear models wherein only the individuals who hold very similar opinions interact with and influence each other, in a manner that resembles confirmation bias, e.g., \cite{SignedBounded, DiscreteSignedBounded} and references therein. Such interactions may result in the emergence of multiple consensus clusters. A case in which antagonistic, neutral, and collaborative relationships are defined with respect to confidence levels is studied in \cite{BalanceSeeking}, where it is shown that under certain conditions for the amount of confidence levels for mentioned types of relationships, global consensus, bipartite consensus or clustering of opinions can be achieved. 

    All the aforementioned models consider the evolution of opinions without the existence of agreement pressure on individuals. However, there is evidence that the agreement pressure in a group often influences individuals' opinions. For instance, the study of a case in which individuals are exposed to increasing but bounded agreement pressure states that individuals' opinions will converge to a fixed distribution \cite{AgreementPressure}. A modified version of the Hegselmann-Krause model incorporating a pressure parameter reaches consensus more easily and faster than the original Hegselmann-Krause model \cite{cheng2019opinion}. A model accounting for inconsistency between private and expressed opinions due to conformity pressure is studied in \cite{ye2019influence}. This model can describe Asch's line-matching paradigm thoroughly with the help of resilience to pressure and susceptibility to persuasion parameters. We refer the reader to \cite{anderson2019recent} for other recent developments in the field of opinion dynamics.
    
\subsection{An Overview of Groupthink}
    
    The theory of groupthink was first introduced by Irving L. Janis in 1971 \cite{janis1971groupthink}. By definition, groupthink is a concurrence-seeking tendency. When this tendency becomes dominant in a cohesive in-group, members will irrationally disregard and decline unpopular realistic views \cite{janis1971groupthink}. In other words, this tendency prevents people from treating controversial views realistically \cite{ElliotAronson2016Social}. Groupthink has a close relation to the Solomon Asch line-matching paradigm \cite{robbins2018essentials}. Members holding different opinions from the majority are under pressure to conform and suppress their true opinions. Groupthink does not always happen, as it requires certain antecedent conditions to be met, such as high cohesiveness within the group, isolation of the members from contrary views, and a lack of impartial leadership. If these antecedents are met, certain consequences will be observed, such as the illusion of invulnerability/unanimity and self-censorship, often leading to defective decision-making \cite{janis2008groupthink}. This process is summarized in Fig.~\ref{groupthink conditions}. Remarkable consequential examples of the groupthink phenomenon and victims of groupthink 
    include the Bay of Pigs; the Pearl Harbor attack; the North Korean escalation; the Vietnam escalation \cite{janis1972victims}; the grounding of Swissair, the flying bank \cite{hermann2010grounding}, and more recently, to some degree, the failures at each stage of Brexit 
    \cite{lees2020brexit}.

    \begin{figure*}[t!]
        \centering
        \includegraphics[width=1\linewidth]
        {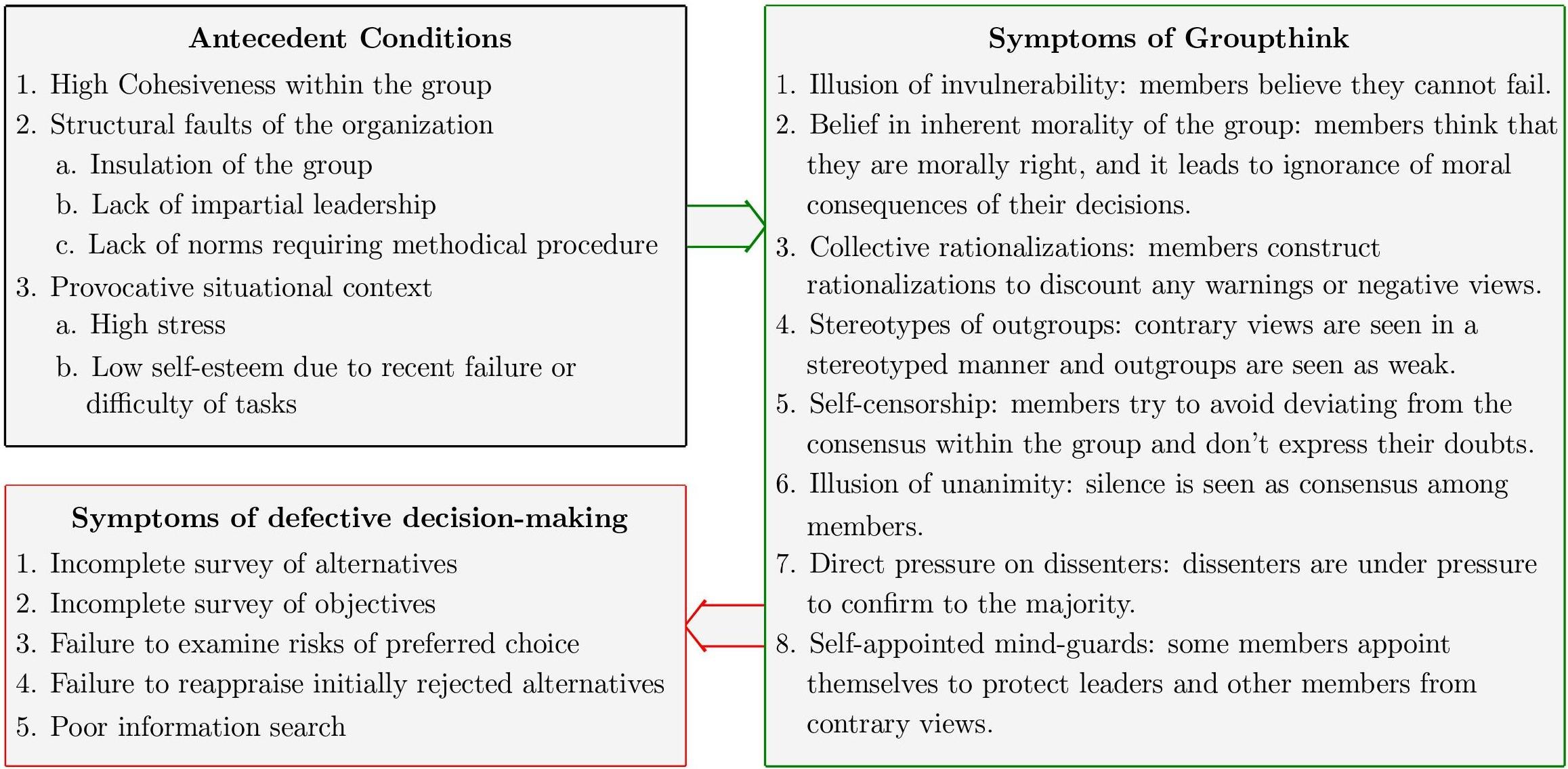}
        \caption{Groupthink phenomenon in summary, based on \cite{janis2008groupthink}}
        \label{groupthink conditions}
    \end{figure*}
    
    Following the introduction of the groupthink theory by Irving L. Janis, several modifications have been proposed \cite{rose2011diverse}. According to the so-called {\it ubiquity model} of groupthink \cite{baron2005so}, the antecedent conditions in Fig.~\ref{groupthink conditions} are not necessary for the groupthink to occur. In fact, there is evidence that the symptoms in Fig.~\ref{groupthink conditions} have arisen without the presence of such antecedents. The ubiquity model introduces three key conditions necessary for this phenomenon to happen in everyday groups. They include social identification, salient norms, and low self-efficacy. Social identification is associated with social acceptance and social rejection, which means that a deviating member will be rejected socially as a punishment. The second antecedent, salient norms, suggests that a group norm will be created by interaction and discussion within the group. The last antecedent, low self-efficacy, is related to situational lack of self-confidence in one's problem-solving ability due to recent failure, fear, time pressure, or other factors. All three key conditions act as a motivation for suppressing dissenting views despite the realistic spirit of their content.

    {\it Group polarization}, as a direct, immediate consequence of the groupthink phenomenon, is a group's tendency to make decisions that are more extreme than the initial positions of its members \cite{ElliotAronson2016Social}. For examples, the group's decision moves toward greater risk (or greater caution) if the initial tendency is to be somewhat risky (or cautious). The term group polarization, with which the current work is mainly concerned, should not be confused with opinion polarization, even though the two phenomena are correlated and may coexist within a population.


\subsection{Contributions}
\label{contributions}

    The main contribution of this paper is the introduction and investigation of the Partisan Confidence (PC) model for social learning. It is capable of justifying the emergence of popular, extreme beliefs in a social network. The PC model is inspired by socio-psychological analysis of the groupthink phenomenon, the presence of which is evident in both united and divided populations. The PC model is based on a unique understanding of opinions and social influence in that opinions of like-minded individuals tend to resonate when they interact, compared to most of the existing models wherein a compromise instead of resonance is in play. Furthermore, under the PC model, the pre-existence of extremists for the emergence of extreme beliefs is not necessary. A detailed description of the paper's contributions is stated below.
    \begin{itemize}[leftmargin=.125in]
        \item {\it Decomposition of opinions based on Partisan Confidence:} An individual's opinion at any given instant is represented by a scalar, which is consistent with the vast majority of the existing opinion dynamics models. However, we shall rely on a unique characterization of opinions sketched as follows. Assuming that an individual's opinion lies in the interval $[-1,1]$, its sign is viewed as her general belief (ideology, political party, etc.), while its magnitude is interpreted as her confidence about that general belief, justifying the term {\it Partisan Confidence} It should be noted that the partisan confidence decomposition of opinions has resemblances to the viewpoints leading to the models presented in \cite{Dandekar2013,EmergenceOfExtremism}. In \cite{EmergenceOfExtremism}, it is assumed that in some situations the exact opinion of an individual is not known while its discrete choice (yes, no, neutral) is expressed publicly. In \cite{Dandekar2013}, the distance between an individual's opinion and each endpoint of the opinion spectrum represents her support for that marginal belief. 
        \item {\it Social influence characterization:} In our approach, when individuals interact, influence is modeled in such a way that (i) the opinions of like-minded individuals, i.e., those who share the same general belief at the time, resonate with each other, meaning that the individual make one another more confident in that shared general belief; (ii) it accounts for the confirmation bias, which means that the influence on each other of individuals with opposite general beliefs is discounted; and (iii) an individual with higher confidence in her general belief is deemed more influential. The first item listed above defies the mainstream line of opinion dynamics research where like-minded individuals tend to compromise (not resonate) when they interact.
        \item {\it Introducing the PC and PC-lite models:} Based on the characterization of social influence described above, we introduce and investigate two models for social learning, namely the PC and PC-lite models. While the PC model is more compelling, since it also accounts for the confirmation bias (item (ii) above), the PC-lite model is of great importance as it demonstrates that group polarization may occur even without the contribution of confirmation bias, which highlights a fundamental difference between this work and \cite{Differentiation,Dandekar2013}. It is also important to note that the PC and PC-lite models, unlike those introduced in \cite{Differentiation,Dandekar2013}, are both time-varying, that is a must-have property for models capturing a practical social dynamics.
        \item {\it Deriving conditions for group polarization:} Through rigorous analysis of the PC and PC-lite models, we argue for the possible existence of communities within a population, henceforth called {\it social bubbles}, that are bound for group polarization if the antecedent conditions of groupthink are satisfied. A definition of {\it social bubble} is given; intuitively, it is a group isolated from the outside population beyond a certain level. We then prove that if a certain degree of homogeneity is present within a social bubble at some point in time, that is, if the individuals in the bubble share the same general belief and are confident in that belief beyond a relatively low degree, then their confidence will increase over time and reach an extremely high value, at which point the group polarization will have seemed to materialize in this bubble. The extent of the bubble's polarization is shown to be directly related to its isolation level, and for the PC model, also to the intensity of the confirmation bias of those within the bubble. 
        
        
    \end{itemize}  

\subsection{Paper Organization}

    In Subsection~\ref{Preliminaries And Notation}, we provide some preliminaries and notations for later use. In Section~\ref{First Model and its properties}, we introduce and investigate a basic model, the so-called {\it PC-lite} model, generalization of which leads to the PC model in Section~\ref{Second Model and its properties}. We discuss the inherent properties of the proposed models in their corresponding sections. In Section~\ref{Numerical Examples}, we provide simulations to demonstrate properties of the proposed models. We conclude the paper by identifying some future directions of research in Section~\ref{Conclusion}. For ease of presentation, we relegate all the proofs and auxiliary lemmas to the Section \ref{Proofs}.

\subsection{Notation}
\label{Preliminaries And Notation}

    We let $\V = \{1,\ldots,n\}$ be the set of all individuals or, as is often called from now on, {\it agents.} The opinion of agent $i$ at discrete time $t$, $t \geq 0$, is denoted by $x_i(t) \in [-1,1]$, and its sign is denoted by $\sgn(x_i(t))$. A weighted, time-varying digraph $\mathcal{G}(t)=(\V,\mathcal{E}(t),W(t))$ is assumed to represent the topology of the network of agents over time, where $\V$ is the set of nodes, $\mathcal{E}(t) \subseteq \{(i,j)|i,j \in \V, i\not=j\}$ is the set of edges at time $t$ indicating the interactions among the agents, and an element $w_{ij}(t)$ of the weight matrix $W(t)$ indicates the weight of influence of agent $j$ on agent $i$ at time $t$. It is assumed that $w_{ij}(t)>0$ if $(i,j) \in \mathcal{E}(t)$, and $w_{ij}(t)=0$ otherwise. The non-negative matrix $W(t)$ is called {\it row-stochastic} if the elements of each of its rows sum up to 1, that is, if $\sum_{j\in \V} w_{ij}(t)=1$ for all $i \in \V$. It is called {\it row-substochastic} if $\sum_{j\in \V} w_{ij}(t)\leq 1$ for all $i \in \V$, and there exists some $k\in \V$ such that $\sum_{j\in \V} w_{kj}(t) < 1$. Throughout the paper, and in the proofs in particular, the argument $t$ of time-varying functions is dropped for the benefit of notational convenience. For instance, $x_i$, $z$, and $w_{ij}$ often replace $x_i(t)$, $z(t)$, and $w_{ij}(t)$, respectively. Furthermore, an agent's update value, $x_i(t+1)-x_i(t)$, will be denoted by $\Delta x_i$, that itself is short for $\Delta x_i(t)$. 


\section{Partisan Confidence-lite Model and Its Properties}
\label{First Model and its properties}

    In this section, we introduce, justify, and investigate the Partisan Confidence-light (PC-lite) model for the evolution of opinions in a social network. Let $x_i(t) \in [-1,1]$ denote the opinion of agent $i \in \V$ at discrete time $t \geq 0$, where $\V = \{1,\ldots,n\}$ is the set of all agents. In the PC-lite model, the opinion of every agent $i$ evolves according to the following discrete-time dynamics:
    \begin{flalign}
        \text{\bf(PC-lite dynamics)}\hspace{1.1in}        \Delta x_i = \sum_{j \neq i} \Big[w_{ij} |x_j| (\sgn(x_j) - x_i)\Big].&&
    \label{model}
    \end{flalign}
    To be clear, as described in Subsection~\ref{Preliminaries And Notation}, the dynamics \eqref{model} should be read as
    \begin{equation}
        x_i(t+1) - x_i(t) = \sum_{j \neq i} \Big[ w_{ij}(t) |x_j(t)| (\sgn(x_j(t))-x_i(t)) \Big].
    \end{equation}
    According to the PC-lite dynamics \eqref{model}, a self-weight $w_{ii}$ is not present and does not contribute to the opinion change of agent $i$ at time $t$. Thus, we can assume $w_{ii}(t)=0, \forall i\in\V, t\ge 0$. This assumption eliminates the self-loop from the underlying graph $\mathcal{G}(t)$, and hence results in a row-substochastic adjacency matrix $W(t)$.
\subsection{Justification of the PC-lite Model}
\label{Justification of the PC-lite Model}    
    
    One notices that the PC-lite dynamics \eqref{model} follows, in principle, the same rule of social influence as the time-varying version of the DeGroot model \cite{degroot:74},
    \begin{flalign}
        \text{\bf(DeGroot dynamics)}\hspace{1.35in}        \Delta x_i = \sum_{j \neq i} \Big[w_{ij} (x_j - x_i)\Big].&&
    \label{degroot}
    \end{flalign}
    However, \eqref{model} is set up on a fundamentally distinctive interpretation of opinion perception, that is, agent $i$ perceives the opinion $x_j$ of agent $j$ as an approval of $\sgn(x_j)$ with confidence level $|x_j|$. Subsequently, the weight of influence of agent $j$ on agent $i$ is discounted by the factor $|x_j|$, while the opinion $x_j$ of agent $j$ is replaced by $\sgn{(x_j)}$. 
    Therefore, $x_j$ is decomposed into two parts: a direction part $\sgn{(x_j)}$, which can be viewed as party affiliation in political terms, and an intensity or confidence part $|x_j|$. It is this decomposition of agents' opinions that justifies the appellation {\it Partisan Confidence}. Indeed, $\sgn{(x_j)}$ is often concerned with a much more specific issue than one's political party. For instance, on the issue of abortion rights, it addresses whether a person generally supports or is against abortion rights.

    The consideration that the influence weights $w_{ij}{(t)}$ are time-varying adds to the practicality of the PC-lite model since (i) no agent has to interact with the same set of agents at all times, i.e., there are asynchronous interactions, and (ii) the dynamics \eqref{model} with fixed weights $w_{ij}$ cannot accurately model human thinking, i.e., there is model uncertainty.

    We may use $w_{ij}(t)|x_j(t)|$ to refer to the overall influence of agent $j$ on agent $i$ at time $t$. The amount of this overall influence is determined by the intensity of an opinion $|x_j(t)|$. On the other hand, the direction $\sgn{(x_j(t))}$ determines whether the overall influence is in favor of or against an opinion.
    In fact, one can think of $|x_j(t)|$ as the intensity of the emotion being transferred to an another agent that either advocates or disapproves of some position or idea, and this intensity governs the overall influence. Thus, the more extreme the emotion, the greater the overall influence would be, and vice versa. When the opinion of agent $j$ at time $t$ is zero, we assume that she is completely neutral; thus, her opinion will not drag the opinion of agent $i$ at time $t$ in either directions on the opinion spectrum. In other words, a neutral opinion dose not contribute to the opinion change of an agent. 

\subsection{Properties of the PC-lite Model}
\label{Properties of the First Model}

    We now investigate the PC-lite dynamics \eqref{model} in detail and discuss why it can explain the groupthink behavior. A key antecedent condition for groupthink is isolation of the group members from the outside population. We start off with a simple but important result that highlights the unique capability of the PC-lite model (and also the PC model discussed in the next section) in explaining group polarization and the emergence of popular, extreme beliefs in a network of agents.

    \begin{definition}[\bf{Connectedness}]
        Given the model \eqref{model}, a subset $\B\subseteq \V, |\B| > 1$, of agents is said to be {\it connected} if
        \begin{equation}
            \sum_{j \in \B}\sum_{t=0}^\infty w_{ij}(t) =\infty,~\forall i \in \B.
        \end{equation}
    \end{definition}
    
    A connected subset $\B$ of individuals is not necessarily blended, a property defined as strong connectivity of the graph with nodes $\B$, $|\B|>1$, and edges $(i,j)$ which exist if and only if
    \begin{equation}
        \sum_{t=0}^\infty w_{ij}(t) =\infty.
    \end{equation}
    More precisely, a subset is connected if and only if it is blended or can be divided into multiple blended subsets. This means that any result stated later on for connected subsets also applies to blended subsets, which may be of greater interest in the given context of social networks.
    
    \begin{proposition}
    \label{prop:polarization_result}
        Given the model \eqref{model}, let a connected subset $\B\subseteq \V$ of agents be isolated from outside, i.e., for any $i \in \B$ and $t \geq 0$, assume that
        \begin{equation}
            \sum_{j \in \V \backslash \B} w_{ij}(t)=0.
        \label{prop 1 condition}
        \end{equation}
        If for some $t_0$, $x_i(t_0) > 0$, $\forall i \in \B$, then we have $\lim_{t \rightarrow \infty}{x_i(t)} =1$, $\forall i \in \B$.
    \end{proposition}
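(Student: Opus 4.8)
The plan is to show that, once all agents in $\B$ are positive at some time $t_0$, the isolation of $\B$ turns the update \eqref{model} into a pure pull of each opinion toward $+1$, so every $x_i$ with $i \in \B$ is non-decreasing and convergent, and then to invoke connectedness to exclude any limit strictly below $1$.

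\emph{Step 1: invariance and monotonicity.} First I would prove by induction on $t \ge t_0$ that $x_i(t) \in (0,1]$ for every $i \in \B$. In the inductive step, the isolation condition \eqref{prop 1 condition} kills every weight $w_{ij}(t)$ with $j \notin \B$, so the update of $x_i$, $i \in \B$, involves only indices $j \in \B$, all of which satisfy $x_j(t) > 0$ by the inductive hypothesis; hence $\sgn(x_j(t)) = 1$, $|x_j(t)| = x_j(t)$, and \eqref{model} collapses to
\begin{equation*}
  x_i(t+1) = \big(1 - s_i(t)\big)\, x_i(t) + s_i(t), \qquad s_i(t) := \sum_{j \in \B,\ j \ne i} w_{ij}(t)\, x_j(t),
\end{equation*}
with $s_i(t) \in [0,1]$ because $W(t)$ is row-substochastic and $x_j(t) \le 1$. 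Thus $x_i(t+1)$ is a convex combination of $x_i(t)$ and $1$, which keeps it in $(0,1]$ and gives $\Delta x_i(t) = (1 - x_i(t))\, s_i(t) \ge 0$. Being non-decreasing and bounded above by $1$, each $x_i$ then converges to some $x_i^\ast \in (0,1]$, with $x_i(t) \le x_i^\ast$ for all $t \ge t_0$.

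\emph{Step 2: forcing the limit to $1$.} Suppose toward a contradiction that $x_i^\ast < 1$ for some $i \in \B$, and set $\delta := 1 - x_i^\ast > 0$ and $m_0 := \min_{k \in \B} x_k(t_0) > 0$. From Step 1, $1 - x_i(t) \ge \delta$ and $x_j(t) \ge m_0$ for all $t \ge t_0$ and $j \in \B$, so $\Delta x_i(t) \ge \delta\, m_0 \sum_{j \in \B,\ j \ne i} w_{ij}(t)$. On the other hand, the increments telescope: $\sum_{t \ge t_0} \Delta x_i(t) = x_i^\ast - x_i(t_0) < \infty$. Combining these forces $\sum_{t \ge t_0} \sum_{j \in \B} w_{ij}(t) < \infty$ (using $w_{ii} = 0$), and adding the finitely many terms with $t < t_0$ contradicts $\sum_{j \in \B}\sum_{t=0}^\infty w_{ij}(t) = \infty$. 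Hence $x_i^\ast = 1$ for every $i \in \B$.

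\emph{Main obstacle.} The one point that needs care is the invariance in Step 1: one must verify that positivity (and the bound by $1$) is genuinely preserved, since this is precisely what makes the sign and absolute-value terms disappear and converts the resonance-type update into monotone convergence to the extreme. After that the argument is immediate, the only structural fact used being that a bounded non-decreasing sequence has summable increments --- which turns the $\ell^1$-type divergence in the definition of connectedness into the needed contradiction; in particular no finer notion such as blendedness or joint strong connectivity is required for this statement.
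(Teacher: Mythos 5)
Your proof is correct, but it takes a genuinely different route from the paper. The paper does not prove Proposition~\ref{prop:polarization_result} directly: it derives it as the degenerate $\gamma_\B\to\infty$ case of Proposition~\ref{polarization_result}, which in turn sits at the bottom of a chain of specializations of Theorem~\ref{general PC result}; that theorem's proof tracks the minimum $z(t)=\min_{i\in\B}x_i(t)$, introduces an auxiliary monotone map $f$ and a comparison vector with entries $-1$ outside the bubble, and runs a six-step argument to handle leakage of influence across the bubble boundary and the discounting function. You instead exploit complete isolation to observe that the update collapses to $x_i(t+1)=(1-s_i(t))x_i(t)+s_i(t)$, a convex combination with $1$, which gives per-agent monotonicity and invariance of $(0,1]$ at once; your Step 2 is essentially the same telescoping-versus-divergence contradiction the paper uses in Step 4 of its general proof, but applied directly to a single agent. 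What your approach buys is self-containedness and, arguably, extra rigor: the paper's reduction reads off ``$\alpha_1=0$, $\alpha_2=1$'' from \eqref{alphagamma} at $\gamma_\B=\infty$, even though these are only limiting values and not literal positive solutions of that equation for any finite $\gamma_\B$, so the specialization is informal at exactly this endpoint. What the paper's approach buys is generality: the same machinery covers partial isolation and confirmation bias, whereas your argument relies essentially on \eqref{prop 1 condition} making every individual trajectory monotone, not just the bubble minimum. Your closing remark is also accurate: only the $\ell^1$-divergence in the definition of connectedness is needed, not blendedness.
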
    
    
    \begin{proof}
        Proposition~\ref{prop:polarization_result} will turn out to be a special case of Proposition~\ref{polarization_result}, stated later on in this section. See Subsection~\ref{rest of proofs} for more details.
    \end{proof}
    
    Proposition~\ref{prop:polarization_result} addresses an exaggerated but important situation in which a set of connected agents is completely isolated from the rest of the population. It states that if the agents in this set are initially in agreement, no matter how weak this agreement is, they reach an extremely strong agreement as time passes. In other words, in the absence of opposing views, given consistent interactions among the agents, group polarization is inevitable.
    
    In the following, we argue that the complete isolation condition for group polarization can be relaxed to a realistic one via the notion of a {\it social bubble} that may be present in a population, defined below based on the notion of a {\it bubble number}.

    \begin{definition}[\bf{Bubble number}] Given the model \eqref{model} and a subset $\B\subseteq \V, |\B| > 1$, of agents, the bubble number of $\B$, denoted by $\gamma_\B$, is defined as the largest non-negative constant $\gamma$ that satisfies the following equation for any $i \in \B$:
        \begin{equation}
        \label{buble.prop1}
             \sum_{j \in \B} w_{ij}(t) \geq \gamma_\B \sum_{j \in \V \backslash \B} w_{ij}(t), ~ \forall t.
        \end{equation}
    The bubble number is well-defined for any $\B$ since \eqref{buble.prop1} is satisfied by an upper-bounded, closed interval in $\mathbb{R}$ containing 0.
    \end{definition}

    Equation \eqref{buble.prop1} states that the sum of the weights inside $\B$ is at least $\gamma_\B$ times larger than the sum of the weights to agents outside that bubble. Thus, $\gamma_\B$ indicates the isolation level of $\B$ from outside in the sense of opinion influence. The greater the bubble number $\gamma_\B$, the greater the isolation of the members in the subset. It is worth noting that the bubble number is closely related to the so-called \emph{cut ratio} of a weighted graph \cite{godsil2013algebraic}. More precisely, if we sum \eqref{buble.prop1} over all $i\in \B$, we obtain
    \begin{equation}
        \frac{1}{\gamma_{\B}}\ge \frac{\sum_{i\in \B, j \in \V \backslash \B} w_{ij}(t)}{\sum_{i,j \in \B} w_{ij}(t)},
    \end{equation}
    where the expression on the right side is the ratio of the sum of the edge weights crossing the cut $\B$ over the sum of the edge weights inside that cut. It is known that the minimum cut ratio over all the cuts can be bounded by the algebraic connectivity of the graph \cite{godsil2013algebraic}. Therefore, one can bound the bubble number in terms of the eigenvalues of the adjacency matrix $W(t)$.

    \begin{definition}[\bf{Social bubble}]
    \label{soc bub}
        Given the model \eqref{model}, a subset $\B\subseteq \V, |\B| > 1$, of agents is loosely called a {\it social bubble}, or simply a {\it bubble}, if it has a large bubble number, meaning that it is, to a great extent, isolated from outside influence.
    \end{definition}

    
    From the theory of groupthink, it is expected that agents in a connected bubble will intensify cohesiveness (if it exists) in a discussion, seeking stronger agreement within the bubble. According to the following theorem, this phenomenon is very well captured by the PC-lite model for social learning.
    
    \begin{proposition}
    \label{polarization_result}
        Given the PC-lite dynamics \eqref{model}, let a connected subset $\B\subseteq \V$ of agents have the bubble number
        \begin{equation}
        \label{low1}
            \gamma_\B>3+2\sqrt{2},
        \end{equation}
        and assume that $\alpha_1$ and $\alpha_2$, where $\alpha_1 < \alpha_2$, are the two positive solutions of the equation
        \begin{equation}
        \label{alphagamma}
            \frac{1+\alpha}{\alpha(1-\alpha)} = \gamma_\B.
        \end{equation}
        If, for some $t_0$, it happens that
        \begin{equation}
        \label{alpha1}
            x_i(t_0) > \alpha_1,~\forall i \in \B,
        \end{equation}
        then we have
        \begin{equation}
        \label{alpha2}
            \liminf_{t \rightarrow \infty}{x_i(t)} \geq \alpha_2,~\forall i \in \B.
        \end{equation}
    \end{proposition}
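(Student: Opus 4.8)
The plan is to track the lowest opinion in the bubble, $m(t):=\min_{i\in\B}x_i(t)$, to show it is eventually non-decreasing, and to show it must climb all the way up to $\alpha_2$. I would open with bookkeeping on the scalar map $g(\alpha)=\frac{1+\alpha}{\alpha(1-\alpha)}$: on $(0,1)$ it blows up at both endpoints and has a unique interior minimum equal to $3+2\sqrt2$, attained at $\alpha=\sqrt2-1$. Hence $\gamma_\B>3+2\sqrt2$ forces $\alpha_1,\alpha_2$ to exist and lie in $(0,1)$; they are the two roots of $\gamma_\B\alpha^2+(1-\gamma_\B)\alpha+1=0$, so $\alpha_1\alpha_2=1/\gamma_\B$ and in particular $\alpha_1>1/\gamma_\B$. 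Two equivalent reformulations will be convenient: with $c(\beta):=\gamma_\B\beta(1-\beta)-(1+\beta)$ one has $c>0$ on $(\alpha_1,\alpha_2)$ and $c(\alpha_1)=c(\alpha_2)=0$; and with $\rho(\beta):=\frac{\gamma_\B\beta-1}{\gamma_\B\beta+1}$ the inequality $c(\beta)>0$ is the same as $\rho(\beta)>\beta$.

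The analytic core is a one-step lower bound on $x_i(t+1)$. Fix $i\in\B$ and put $a_i(t):=\sum_{j\in\B}w_{ij}(t)$ and $b_i(t):=\sum_{j\in\V\backslash\B}w_{ij}(t)$, so the bubble hypothesis reads $b_i(t)\le a_i(t)/\gamma_\B$, and also $a_i(t)+b_i(t)\le1$. Assume that at time $t$ every $j\in\B$ has $x_j(t)\ge\beta>0$, so $\sgn(x_j(t))=1$. In $\Delta x_i$ the in-bubble terms add up to at least $\beta(1-x_i(t))\,a_i(t)\ge0$ since $x_i(t)\le1$, while each out-of-bubble term $w_{ij}|x_j|(\sgn(x_j)-x_i)$ is at least $-w_{ij}(1+x_i(t))$ (worst case $x_j=-1$). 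Summing and using $b_i\le a_i/\gamma_\B$ produces, after rearranging, an inequality affine in $x_i(t)$:
\[
x_i(t+1)\ \ge\ (1-r_i(t))\,x_i(t)+r_i(t)\,\theta_i(t),\qquad r_i(t):=\beta a_i(t)+b_i(t)\in[0,1],
\]
where $\theta_i(t):=\frac{\beta a_i(t)-b_i(t)}{\beta a_i(t)+b_i(t)}$ (read arbitrarily if $r_i(t)=0$) satisfies $\theta_i(t)\ge\rho(\beta)$, using only $b_i\le a_i/\gamma_\B$. Plugging $x_i(t)=\beta$ into the same computation and using $c(\beta)\ge0$ gives the companion \emph{invariant}: if $\beta\in[\alpha_1,\alpha_2]$ and $x_j(t)\ge\beta$ for all $j\in\B$, then $x_j(t+1)\ge\beta$ for all $j\in\B$ (the resulting bound is increasing in $x_i(t)$, so it suffices to check it at $x_i(t)=\beta$).

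These pieces assemble as follows. Let $\beta_0:=\min_i x_i(t_0)>\alpha_1$. If $\beta_0\ge\alpha_2$, the invariant with $\beta=\alpha_2$ gives $x_i(t)\ge\alpha_2$ for all $t\ge t_0$ and \eqref{alpha2} is immediate. Otherwise $\beta_0\in(\alpha_1,\alpha_2)$; the invariant with $\beta=\beta_0$ keeps every $x_i(t)$ positive for $t\ge t_0$ (so the standing assumption $\sgn(x_j)=1$ used above is legitimate throughout), and the invariant with $\beta=m(t)$, valid while $m(t)\in[\alpha_1,\alpha_2]$, gives $m(t+1)\ge m(t)$. If $m(t)\ge\alpha_2$ ever occurs we conclude as before; otherwise $m(t)$ is non-decreasing and bounded, hence $m(t)\to m^\star$ for some $m^\star\in(\alpha_1,\alpha_2]$, and it remains only to rule out $m^\star<\alpha_2$.

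The step I expect to be the genuine obstacle is this last exclusion: the interactions are time-varying and asynchronous, so individual trajectories $x_i(t)$ need not be monotone and the minimizing agent keeps changing, which defeats a naive ``follow the minimizer'' argument; here the affine bound does the work. Suppose $m^\star<\alpha_2$, so $m^\star\in(\alpha_1,\alpha_2)$ and $\rho(m^\star)>m^\star$. Fix $\mu\in(m^\star,\rho(m^\star))$, choose $\epsilon>0$ small enough that $\beta:=m^\star-\epsilon$ still satisfies $\beta>\max(\alpha_1,1/\gamma_\B)$ and $\rho(\beta)>\mu$ (possible since $\rho$ is continuous and $\alpha_1>1/\gamma_\B$), and pick $T$ with $m(t)\ge\beta$ for $t\ge T$. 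For $t\ge T$ the displayed inequality applies with this $\beta$, and since $\theta_i(t)\ge\rho(\beta)>\mu$ it yields $x_i(t+1)\ge(1-r_i(t))x_i(t)+r_i(t)\mu$; thus $\mu-x_i(t)$ is multiplied at each step by $1-r_i(t)$, and once $x_i$ reaches $\mu$ it never drops below it. Since $r_i(t)\ge\beta a_i(t)$ and connectedness gives $\sum_t a_i(t)=\infty$, we get $\prod_{s=T}^{t-1}(1-r_i(s))\to0$ as $t\to\infty$, hence $\liminf_{t\to\infty}x_i(t)\ge\mu$ for every $i\in\B$; finiteness of $\B$ then forces $\liminf_t m(t)\ge\mu>m^\star$, contradicting $m(t)\to m^\star$. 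Therefore $m^\star=\alpha_2$, and since $x_i(t)\ge m(t)$ we obtain \eqref{alpha2}. Finally, Proposition~\ref{prop:polarization_result} drops out: under complete isolation $b_i(t)\equiv0$, so the bubble number may be taken arbitrarily large, and letting $\gamma_\B\to\infty$ (so $\alpha_1\to0$, $\alpha_2\to1$) upgrades $x_i(t_0)>0$ to $\lim_t x_i(t)=1$.
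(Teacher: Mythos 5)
Your proof is correct, and at the skeleton level it is the same strategy the paper uses: track the bubble minimum $\min_{i\in\B}x_i(t)$, show it is non-decreasing on $(\alpha_1,\alpha_2)$ and that $[\alpha_2,1]$ is invariant, and then use connectedness ($\sum_t\sum_{j\in\B}w_{ij}(t)=\infty$) to rule out convergence of the minimum to an interior point of $(\alpha_1,\alpha_2)$. The execution differs in two ways worth noting. First, the paper does not prove Proposition~\ref{polarization_result} directly: it proves Theorem~\ref{general PC result} for the PC dynamics with a general non-increasing discount bound $\hat d_i$, via a monotonicity lemma for the one-step map $f$ (Lemma~\ref{increasing_model}) applied to comparison vectors that set in-bubble opinions to the current minimum and out-of-bubble opinions to $-1$; Proposition~\ref{polarization_result} then falls out by specializing $\hat d_i\equiv 1$. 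Your explicit affine bound $x_i(t+1)\ge(1-r_i)x_i+r_i\theta_i$ with $\theta_i\ge\rho(\beta)$ replaces that lemma and is more self-contained for the PC-lite case, but it would need the discounting factor threaded through $\theta_i$ to recover the full theorem. Second, in the exclusion step the paper argues in two stages (if $x_i$ never exceeds $z^*+\epsilon$ the increments sum to $+\infty$, a contradiction with boundedness; and once above $z^*+\epsilon$ an agent stays above), whereas your single contraction $\mu-x_i(t+1)\le(1-r_i(t))(\mu-x_i(t))$ with $\prod_s(1-r_i(s))\to 0$ packages both stages at once and is arguably cleaner. The explicit observation that $\min_{(0,1)}\frac{1+\alpha}{\alpha(1-\alpha)}=3+2\sqrt2$ at $\alpha=\sqrt2-1$, which explains the threshold \eqref{low1}, is stated only informally in the paper after the proposition; your derivation makes it precise.
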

    
    \begin{proof}
        Proposition~\ref{polarization_result} will turn out to be a special case of Proposition~\ref{polarization_result_Second_model}, stated later on in the paper. See Subsection~\ref{rest of proofs} for more details.
    \end{proof}
    
    The threshold $3+2\sqrt{2}$ in Proposition~\ref{polarization_result} marks the smallest possible $\gamma_\B$ for which \eqref{alphagamma} has two positive real solutions for $\alpha$. It can also be viewed as the threshold that makes the loosely defined notion of a ``social bubble'' in Definition \ref{soc bub} precise. Therefore, Proposition~\ref{polarization_result} implies that if the agents in a bubble reach a certain degree of cohesiveness, that is, are at least $\alpha_1$-confident in advocating in favor of a common position, then their confidence tends to grow higher, beyond degree $\alpha_2$. As the bubble number $\gamma_\B$ increases, $\alpha_1$ and $\alpha_2$ will decrease and increase, respectively, as demonstrated in Figure~\ref{Alpha, model 1}. In limit, as $\gamma_\B$ goes to infinity, $\alpha_1$ reaches 0 while $\alpha_2$ reaches 1, making the case for Proposition~\ref{prop:polarization_result}.  For $\gamma_\B \simeq 7.83$, $\alpha_1$ and $\alpha_2$ are 1/2 apart. 
    It should also be noted that the same can be said about a bubble in which the agents disapprove of a position. Hence, in summary, Proposition~\ref{polarization_result} shows group polarization occurring within a connected bubble if the opinions of the agents in the bubble have reached a certain degree of support/disapproval of any given position at some time $t_0$.
    
    We note that the term social bubble is broader than what is loosely known as {\it echo chamber,} in that a social bubble, unlike an echo chamber, may include individuals with diverse or even opposite beliefs. Proposition~\ref{polarization_result} demonstrates that once a social bubble turns into an ``echo chamber,'' i.e., once the condition \eqref{alpha1} is satisfied, one should expect exaggeration of those beliefs as time grows, that is \eqref{alpha2}.
    
     \begin{figure}[t]
         \centering
         \includegraphics[width=.5\linewidth]
         {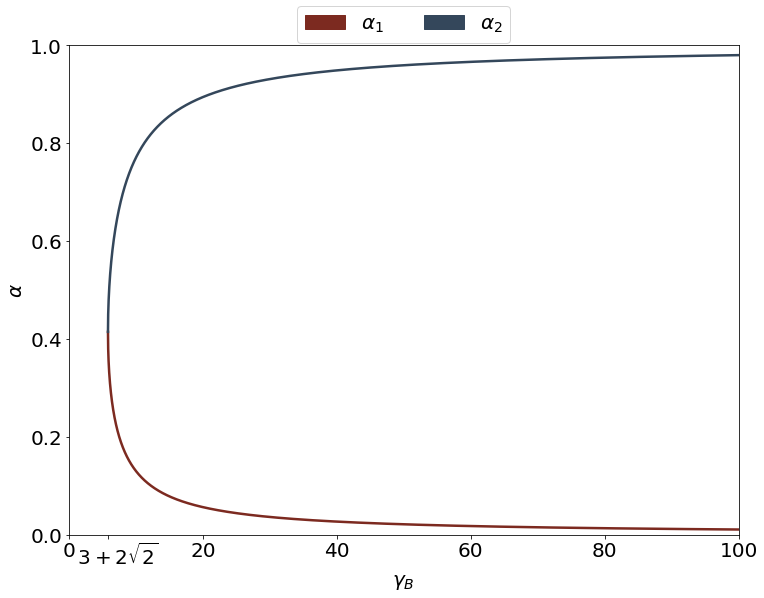}
         \caption{Changes of $\alpha_1$ and $\alpha_2$ with respect to $\gamma_B$}
         \label{Alpha, model 1}
     \end{figure}    
    
    \begin{remark}
    \label{corollary 1}  
        Suppose that $\V$ contains at least $m$ connected, pairwise disjoint social bubbles $\B_1, \ldots,\B_m$. Now, depending on whether, for each bubble $\B_k, k=1,\ldots,m$, we have $x_i(t_{0_k}) > \alpha_1,~\forall i \in \B_k$ or $x_i(t_{0_k}) < -\alpha_1,~\forall i \in \B_k$, where $t_{0_k} \geq 0$, we have
        \begin{equation}
            \liminf_{t \rightarrow \infty}{x_i(t)} \geq \alpha_2,~\forall i \in \B_k, \label{pos}
        \end{equation}
    or
        \begin{equation}
            \liminf_{t \rightarrow \infty}{x_i(t)} \leq -\alpha_2,~\forall i \in \B_k, \label{neg}
        \end{equation} 
        respectively. In particular, the asymptotic structure of the bubbles can be represented via one of the $2^{m}$ vectors $s\in \{-1,1\}^m$ such that $s_k=+1$ if \eqref{pos} holds, and $s_k=-1$ if \eqref{neg} holds. Thus, the network can exhibit at least $2^m$ substantially different limiting behaviors.
    \end{remark}
    
    
\section{Partisan Confidence Model and Its Properties}
\label{Second Model and its properties}

    Acting toward opinions with a bias has been well documented in confirmation bias theory \cite{breuning2020neurochemistry}. {\color{magenta}A}gents tend to respond with a bias toward information inconsistent with their own information, beliefs, and old experiences. Also, agents tend to willingly ignore some nonconforming information and opinions only to fit into their social groups \cite{allen2020mental}. In summary, this bias can be due to receiving information that inconsistent or in conflict with one's social norm or identity.
    
    In this section, we introduce the Partisan Confidence (PC) model, which is a generalization of the PC-lite model \eqref{model} that accounts for the the agents' confirmation bias. As we discussed earlier, the PC-lite model \eqref{model} can describe the group polarization caused by the groupthink behavior described in Irving L. Janis's seminal work \cite{janis2008groupthink}. To fit that model into Robert S. Baron's more advanced model of groupthink \cite{baron2005so}, we assume that each agent $i$ discounts the influence of contrary views received from any other agent and propose the following opinion dynamics model:
    \begin{flalign}
    \label{ModifiedModelPart2}
        \text{\bf(PC dynamics)}\hspace{1.2in}               \Delta x_i= \sum_{j \neq i} \Big[d_i(x_i,x_j) w_{ij} |x_j| (\sgn(x_j) - x_i)\Big],&&
    \end{flalign}
    where $d_i:[-1,1]^2 \rightarrow [0,1]$ is a discounting function elaborated in the following subsection, before investigating the properties of the PC model. Inclusion of the discounting function in the PC model \eqref{ModifiedModelPart2}, that is, discounting of opposing views, in a sense amplifies the isolation degree of a cohesive bubble. Hence, in view of Proposition~\ref{polarization_result}, one expects that the bubble number threshold for group polarization should now be lower, as is made concrete later on.
    
\subsection{Discounting Function}
    
    As implied from its title, the discounting function is assumed to always return a number within $[0,1]$; a trivial assumption which will not be repeated but made throughout. Furthermore, in view of the confirmation bias, an agent $i$ is to discount the influence on her of an agent $j$ with a general belief opposite to hers. No discount is expected otherwise, meaning that
    \begin{equation}
        d_i(x_i,x_j) = 1 ~ \text{if}~ \sgn(x_i)=\sgn(x_j).
    \label{d 1}
    \end{equation}
    We also assume that the discount value for opposing general beliefs is at all times upper bounded as
    \begin{equation}
        d_i(x_i,x_j) \leq \hat{d}_i(|x_i|)~\text{if}~\sgn(x_i)\neq\sgn(x_j),
        \label{relaxed d}
    \end{equation}
    where $\hat{d}_i:[0,1]\rightarrow [0,1]$ is an arbitrary non-increasing function. It should be noted that the non-increasing assumption on $\hat{d}_i$ is reasonable, as it implies that confirmation bias increases with confidence. While our analysis shall remain valid for any discounting function satisfying \eqref{d 1} and \eqref{relaxed d} for a non-increasing $\hat{d}_i$, to shed some light on the PC model, we consider the following candidate for $\hat{d}_i$:
    \begin{equation}
        \hat{d}_i(|x_i|) = 1-(1-d)|x_i|^{\beta}
        \label{discounting function- alpah and beta}
    \end{equation}
    where $d$ and $\beta$ are constants satisfying $0 \leq d \leq 1$ and $\beta > 0$. This means that the condition \eqref{relaxed d} is now translates to    \begin{equation}
        d_i(x_i,x_j) \leq 1-(1-d)|x_i|^{\beta}~\text{if}~\sgn(x_i)\neq\sgn(x_j).
        \label{bound for general discounting function}
    \end{equation}
    In what follows, the interpretation of the parameters $d$ and $\beta$, along with the justification of the upper bound assumption, are given. The case for a general discounting function satisfying \eqref{d 1} and \eqref{relaxed d} will be discussed in the very end of the section.

Let us start with the reason why \eqref{bound for general discounting function} only imposes an upper bound on the discounting function instead of assuming an exact formulation. We believe that any exact formulation is too restrictive and unrealistic in a social network setting. An upper bound, with two degrees of freedom in $d$ and $\beta$, allows for a great deal of uncertainty and agents variability in the model, which means the results derived based upon the PC dynamics remain credible in a practical setting. It also addresses the case where the discounting function also varies over time, that is if $d_i$ is a function of $t$ besides $x_i$ and $x_j$.

With the exact discounting function approach ruled out, one wonders why upper-bounding is selected for approximating the discounting function among various possible non-exact formulations. The answer to that lies in the fact that the issue in hand is group polarization, which is reasonably expected to strengthen with the strength of the discount of opposing beliefs. Thus, if some group polarization result is valid for a given discounting function, a group polarization result at least as strong should hold for discounting functions with less values.

We now discuss the properties of the upper bound function in \eqref{bound for general discounting function}, that is $1-(1-d)|x_i|^{\beta}$. First of all, for neutral agents, i.e., when $x_i \rightarrow 0$, it returns 1, which allows for the continuity with respect to $x_i$ of the broader $d_i$ characterized via \eqref{d 1} and \eqref{bound for general discounting function}. This is a very important property to satisfy if $d_i$ is to be realistic in any shape or form. Then, we focus on how $d$ and $\beta$, earlier branded as the degrees of freedom in the upper bound function, are interpreted. We first notice that $1-(1-d)|x_i|^{\beta}$ is non-decreasing in both $d$ and $\beta$. The parameter $d$ can be viewed as a uniform discount factor when $\beta$ is small. It also serves as an upper bound for the discount value employed by the extremely confident individuals, i.e., those with opinions close to 1 in absolute value. If $d=1$, the PC model converts to the PC-lite model. The parameter $\beta$ can be viewed as the discount's decay rate with respect to $|x_i|$. In other words, it captures the contribution of an individual's confidence to her discount value of opposing beliefs. The case where $\beta \rightarrow \infty$ turns the PC dynamics to its PC-lite counterpart.

\subsection{Properties of the PC Model}

    We now aim to investigate the behavior of the PC dynamics \eqref{ModifiedModelPart2}, with the discounting function $d_i$ characterized through \eqref{d 1} and \eqref{bound for general discounting function}. Special cases of \eqref{bound for general discounting function}, corresponding to marginal values of $d$ and $\beta$, are of particular interest to better understand the behavior of a general discounting function under conditions \eqref{d 1} and \eqref{bound for general discounting function}, and later a more general discounting function only restricted by \eqref{d 1}. As discussed earlier, either case of $d=1$ and $\beta \rightarrow \infty$ eliminates the confirmation bias and consequently simplifies the PC dynamics to the PC-lite dynamics, which was thoroughly investigated in Subsection~\ref{Properties of the First Model}. The marginal case $d=0$ will be later in Remark~\ref{d=0} argued to transpire the ``ultimate'' group polarization, where the opinions within a social bubble reach one of the very most extreme values $\pm 1$. The last marginal case, which will prove to be both interesting and informative, is that of $\beta \rightarrow 0$, that allows for an infinitely fast decay in the discount value with respect to $|x_i|$ and in limit amounts to a uniform upper bound on the discount value of opposing beliefs, i.e.,
    \begin{equation}
        d_i(x_i,x_j) \leq d ~\text{if}~\sgn(x_i)\neq\sgn(x_j).
        \label{d uniform}
    \end{equation}
In this case, Proposition~\ref{polarization_result} can be generalized as follows.

    \begin{proposition}
    \label{polarization_result_Second_model}
        Given the PC dynamics \eqref{ModifiedModelPart2}, with the discounting function $d_i$ satisfying \eqref{d 1} and \eqref{d uniform}, let a connected subset $\B\subseteq \V$ of agents have the bubble number
        \begin{equation}
        \label{low2}
            \gamma_\B>(3+2\sqrt{2})d,    
        \end{equation}
        and assume that $\alpha_1$ and $\alpha_2$, where $\alpha_1 < \alpha_2$, are the two positive solutions of the equation
        \begin{equation}
        \label{alphagamma2}
            \frac{1+\alpha}{\alpha(1-\alpha)} = \frac{\gamma_\B}{d}.
        \end{equation}
        If, for some $t_0$, it happens that
        \begin{equation}
        \label{alpha1_Second_model}
            x_i(t_0) > \alpha_1,~\forall i \in \B,
        \end{equation}
        then we have
        \begin{equation}
        \label{alpha2_Second_model}
            \liminf_{t \rightarrow \infty}{x_i(t)} \geq \alpha_2,~\forall i \in \B.
        \end{equation}
    \end{proposition}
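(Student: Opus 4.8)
The plan is to reduce everything to the running minimum $m(t):=\min_{k\in\B}x_k(t)$ and to show $\liminf_{t\to\infty}m(t)\ge\alpha_2$; since $x_i(t)\ge m(t)$ for all $i\in\B$ and all $t$, this gives \eqref{alpha2_Second_model} at once. I would first record two auxiliary facts. (i) The PC dynamics keeps each $x_i(t)$ in $[-1,1]$: writing \eqref{ModifiedModelPart2} as $x_i(t+1)=(1-\Lambda_i)x_i+\sum_{j\neq i}d_iw_{ij}|x_j|\sgn(x_j)$ with $\Lambda_i=\sum_{j\neq i}d_iw_{ij}|x_j|\in[0,1]$ and observing that the last sum has modulus at most $\Lambda_i$ gives $|x_i(t+1)|\le(1-\Lambda_i)|x_i|+\Lambda_i\le1$. (ii) While every agent of $\B$ is positive, the $j\in\B$ terms of \eqref{ModifiedModelPart2} have $\sgn(x_j)=1$ and $d_i=1$ by \eqref{d 1}, and each $j\notin\B$ term is $\ge-d\,w_{ij}(1+x_i)$ by \eqref{d uniform} and $|x_j|\le1$; with $W_i^{\mathrm{in}}:=\sum_{j\in\B}w_{ij}$, $W_i^{\mathrm{out}}:=\sum_{j\notin\B}w_{ij}$ and the bubble‑number bound $W_i^{\mathrm{out}}\le W_i^{\mathrm{in}}/\gamma_\B$, this yields
\[
  \Delta x_i\ \ge\ W_i^{\mathrm{in}} m(1-x_i)-d\,W_i^{\mathrm{out}}(1+x_i)\ \ge\ W_i^{\mathrm{in}}\Big[\,m(1-x_i)-\tfrac{d}{\gamma_\B}(1+x_i)\,\Big].
\]

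Next I would prove an invariance lemma: if $\min_k x_k(t)\ge\alpha$ with $\alpha\in[\alpha_1,\alpha_2]$, then $\min_k x_k(t+1)\ge\alpha$. For $i\in\B$, bounding $m$ below by $\alpha$ in the expression above leaves a quantity affine in $x_i$, so on $[\alpha,1]$ its minimum is attained at an endpoint: at $x_i=\alpha$ it equals $\alpha+W_i^{\mathrm{in}}\big(\alpha(1-\alpha)-\tfrac{d}{\gamma_\B}(1+\alpha)\big)\ge\alpha$, the bracket being nonnegative precisely because $\tfrac{1+\alpha}{\alpha(1-\alpha)}\le\tfrac{\gamma_\B}{d}$ throughout $[\alpha_1,\alpha_2]$ by \eqref{alphagamma2}; at $x_i=1$ it equals $1-\tfrac{2dW_i^{\mathrm{in}}}{\gamma_\B}\ge1-\tfrac{2d}{\gamma_\B}$, and a short computation using \eqref{alphagamma2} gives $1-\tfrac{2d}{\gamma_\B}=\alpha_2+\tfrac{(1-\alpha_2)^2}{1+\alpha_2}\ge\alpha$. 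Hence, under \eqref{alpha1_Second_model}, $m(t)$ is nondecreasing as long as it stays below $\alpha_2$; so either $m(t)\ge\alpha_2$ from some time on (and we are done), or $m(t)\uparrow m_\infty$ with $m_\infty\in(\alpha_1,\alpha_2)$ — the case to exclude.

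So suppose $m_\infty<\alpha_2$. The engine of the proof is to recast the one‑step bound as a contraction toward the level $\rho(m):=\tfrac{m-d/\gamma_\B}{m+d/\gamma_\B}$: one has $W_i^{\mathrm{in}} m(1-x_i)-dW_i^{\mathrm{out}}(1+x_i)=\lambda_i(\rho_i-x_i)$ with $\lambda_i:=W_i^{\mathrm{in}} m+dW_i^{\mathrm{out}}\in[m_0W_i^{\mathrm{in}},1]$ (using $\sum_{j\neq i}w_{ij}\le1$, $d\le1$, and $m\ge m_0:=m(t_0)$) and $\rho_i:=\tfrac{W_i^{\mathrm{in}} m-dW_i^{\mathrm{out}}}{W_i^{\mathrm{in}} m+dW_i^{\mathrm{out}}}\ge\rho(m)$ (since $\rho_i$ is nonincreasing in $W_i^{\mathrm{out}}\le W_i^{\mathrm{in}}/\gamma_\B$), whence $x_i(t+1)\ge(1-\lambda_i)x_i(t)+\lambda_i\rho(m(t))$. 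Now $\rho$ is increasing and $\rho(m)>m$ on $(\alpha_1,\alpha_2)$ — again equivalent to $\tfrac{1+m}{m(1-m)}<\tfrac{\gamma_\B}{d}$ — so $\rho(m(t))\uparrow\rho(m_\infty)>m_\infty$. Fixing any $\beta<\rho(m_\infty)$ and $T$ with $\rho(m(t))\ge\beta$ for $t\ge T$, the recursion gives $(\beta-x_i(t+1))^+\le(1-m_0W_i^{\mathrm{in}}(t))(\beta-x_i(t))^+$ for $t\ge T$; since $\sum_t W_i^{\mathrm{in}}(t)=\sum_{j\in\B}\sum_t w_{ij}(t)=\infty$ by connectedness and $m_0W_i^{\mathrm{in}}(t)\in[0,1]$, the partial products $\prod_{s=T}^{t-1}(1-m_0W_i^{\mathrm{in}}(s))$ tend to $0$, so $\liminf_t x_i(t)\ge\beta$; letting $\beta\uparrow\rho(m_\infty)$ yields $\liminf_t x_i(t)\ge\rho(m_\infty)$ for every $i\in\B$.

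The main obstacle — and what dictates this roundabout route — is that connectedness only provides $\sum_t W_i^{\mathrm{in}}(t)=\infty$ for each fixed $i$, whereas the agent realizing $m(t)$ may change at every step, so no scalar recursion can be run directly on $m(t)$; one is forced to improve the bound agent by agent and only then exploit finiteness of $\B$. That last step is immediate: $\liminf_t x_i(t)\ge\rho(m_\infty)>m_\infty$ for each of the finitely many $i\in\B$ forces $m(t)>m_\infty$ for all large $t$, contradicting $m(t)\uparrow m_\infty$. Hence $m_\infty\ge\alpha_2$, establishing \eqref{alpha2_Second_model}. Proposition~\ref{polarization_result} is then the case $d=1$, and Proposition~\ref{prop:polarization_result} the limit $\gamma_\B\to\infty$ (where $\alpha_1\to0$, $\alpha_2\to1$).
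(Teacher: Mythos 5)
Your proof is correct, and while it shares the paper's overall architecture -- track $z(t)=\min_{i\in\B}x_i(t)$, show it is nondecreasing on $(\alpha_1,\alpha_2)$ and invariant above $\alpha_2$, then rule out an interior limit $m_\infty$ by combining connectedness with finiteness of $\B$ -- the technical execution is genuinely different. The paper proves the more general Theorem~\ref{general PC result} (arbitrary non-increasing $\hat d_i$) and obtains this proposition by specialization; its machinery consists of the comparison $x_i(t+1)\ge f_i(x(t))$ against the extremal discounting (Lemma~\ref{f dyn pc alt}), the monotonicity of the update map $f$ (Lemma~\ref{increasing_model}), and comparisons of $x(t)$ with constant vectors ($z(t)$ or $\alpha_2$ inside $\B$, $-1$ outside). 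You bypass the monotone-map lemma entirely by bounding each summand of \eqref{ModifiedModelPart2} directly, which is more elementary and self-contained; your endpoint check of the affine lower bound on $[\alpha,1]$ (including the identity $1-2d/\gamma_\B=\alpha_2+(1-\alpha_2)^2/(1+\alpha_2)$) replaces the paper's Steps 1--2. In the contradiction step the paper first shows each $x_i$ must exceed $z^*+\epsilon$ because increments are bounded below by a positive multiple of $\sum_{j\in\B}w_{ij}(t)$ whose sum diverges, and then separately shows it stays above; you package both into a single contraction recursion toward the closed-form level $\rho(m)=\frac{m-d/\gamma_\B}{m+d/\gamma_\B}$ and kill $(\beta-x_i)^+$ via the vanishing product $\prod_s(1-m_0W_i^{\mathrm{in}}(s))$. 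Both routes use connectedness in the same way, but yours yields the extra quantitative information $\liminf_t x_i(t)\ge\rho(m_\infty)$; the trade-off is that the explicit $\rho$ exploits the uniform bound \eqref{d uniform}, whereas the paper's argument extends verbatim to the state-dependent bound \eqref{relaxed d}. Two cosmetic points: your dichotomy should also admit $m_\infty=\alpha_2$ (in which case \eqref{alpha2_Second_model} already holds), and when $W_i^{\mathrm{in}}(t)=0$ the level $\rho_i$ is undefined but $\lambda_i=0$ makes that step vacuous; neither affects correctness.
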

    
    \begin{proof}
        Proposition~\ref{polarization_result_Second_model} will turn out to be a special case of Proposition~\ref{semi-general PC result}, stated later on in this section. See Subsection~\ref{rest of proofs} for more details.
    \end{proof}

    Just like Proposition~\ref{polarization_result}, Proposition~\ref{polarization_result_Second_model} also implies that if all agents within a social bubble reach a certain level of advocacy $+\alpha_1$ or disapproval $-\alpha_1$ of a position, as the interactions continue, agents in that relaxed bubble will reach a more extreme level of advocacy $+\alpha_2$ or disapproval $-\alpha_2$ of that position. Therefore, opinions are intensified and become more extreme or, equivalently, the opinions become polarized within that group. Consequently, a group polarization will occur in the direction of support/disapproval of a specific position. One also notices that as $d$ decreases, $\alpha_1$ and $\alpha_2$ will decrease and increase, respectively, as can be seen in Figure ~\ref{Alpha, model 2}. Therefore, if the conditions of Theorem 2 are satisfied in a social bubble, the result will be that agents with relatively low initial levels of advocacy/disapproval on a position will later have relatively extreme levels of advocacy/disapproval on that position.
    
    \begin{figure}[t]
        \centering
        \includegraphics[width=.5\linewidth]
        {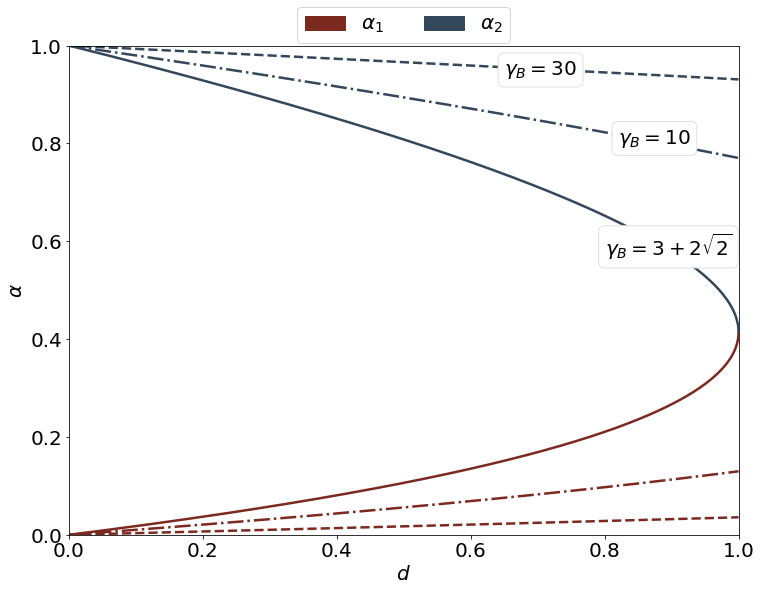}
        \caption{Changes of $\alpha_1$ and $\alpha_2$ with respect to $\gamma_B$ and $d$.}
        \label{Alpha, model 2}
    \end{figure}

    For the purpose of completeness,  we generalize Proposition~\ref{polarization_result_Second_model} to the following proposition, which addresses group polarization under the PC dynamics for general $d$ and $\beta$.
    
    \begin{proposition}\label{semi-general PC result}
        Given the PC dynamics \eqref{ModifiedModelPart2}, with the discounting function $d_i$ satisfying \eqref{d 1} and \eqref{bound for general discounting function}, let a connected subset $\B\subseteq \V$ of agents have the bubble number $\gamma_\B$. Assume that equation
        \begin{equation}
        \label{alphagamma3}
            \frac{1+\alpha}{\alpha(1-\alpha)} = \frac{\gamma_\B}{1-(1-d)\alpha^{\beta}}
        \end{equation}
    has two positive solutions for $\alpha \in (0,1)$, namely $\alpha_1$ and $\alpha_2$, where $\alpha_1 \leq \alpha_2$. If, for some $t_0$, it happens that
        \begin{equation}
            x_i(t_0) > \alpha_1,~\forall i \in \B,
        \end{equation}
        then we have
        \begin{equation}
            \liminf_{t \rightarrow \infty}{x_i(t)} \geq \alpha_2,~\forall i \in \B.
        \end{equation}
    \end{proposition}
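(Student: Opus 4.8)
The plan is to exhibit a forward-invariant ``confidence floor'' for $\B$ and then use the connectedness of $\B$ to raise that floor up to $\alpha_2$. Concretely: (i) for every $\alpha\in[\alpha_1,\alpha_2]$ I would show that the set $\{x:\ x_i\ge\alpha\ \forall i\in\B\}$ is invariant under one step of \eqref{ModifiedModelPart2}; (ii) deduce that $m(t):=\min_{i\in\B}x_i(t)$ is non-decreasing until it reaches $\alpha_2$, hence $m(t)\nearrow\ell$ for some $\ell\le\alpha_2$; and (iii) rule out $\ell<\alpha_2$. Two facts are used freely: opinions stay in $[-1,1]$ (a basic invariance of \eqref{ModifiedModelPart2}), and \eqref{buble.prop1} rewritten as $\sum_{j\in\V\setminus\B}w_{ij}(t)\le\frac1{\gamma_\B}\sum_{j\in\B}w_{ij}(t)$ for all $i\in\B$.

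For (i), fix $\alpha\in[\alpha_1,\alpha_2]\subset(0,1)$, assume $x_j(t)\ge\alpha$ for all $j\in\B$, and take $i\in\B$ with $\xi:=x_i(t)\in[\alpha,1]$; set $A=\sum_{j\in\B}w_{ij}$, $B=\sum_{j\in\V\setminus\B}w_{ij}$. For $j\in\B$ we have $\sgn(x_j)=\sgn(x_i)=+1$, so $d_i(x_i,x_j)=1$ by \eqref{d 1} and the $j$-th summand of \eqref{ModifiedModelPart2} equals $w_{ij}x_j(1-\xi)\ge\alpha(1-\xi)w_{ij}$; for $j\in\V\setminus\B$ the summands with $x_j\ge0$ are non-negative, and those with $x_j<0$ are at least $-(1-(1-d)\xi^{\beta})w_{ij}(1+\xi)$ by \eqref{bound for general discounting function}. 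Writing $\hat{d}(\xi)=1-(1-d)\xi^{\beta}$ and using $B\le A/\gamma_\B$,
\[
x_i(t+1)\ \ge\ \xi+A\Big[\alpha(1-\xi)-\tfrac{\hat{d}(\xi)(1+\xi)}{\gamma_\B}\Big].
\]
Since the right side is affine in $A\in[0,1]$, $x_i(t+1)\ge\alpha$ follows once it is checked at $A=0$ (immediate) and at $A=1$, where it is equivalent to $(\star)$: $\ \gamma_\B/\hat{d}(\xi)\ge(1+\xi)/(\xi(1-\alpha))$ for $\xi\in[\alpha,1]$. Put $f(\xi)=\tfrac{1+\xi}{\xi(1-\xi)}$, $g(\xi)=\tfrac{\gamma_\B}{\hat{d}(\xi)}$; by \eqref{alphagamma3}, $f=g$ at $\alpha_1,\alpha_2$, and since $f-g\to+\infty$ as $\xi\to0^+$, $f-g<0$ strictly between the roots, so $f\le g$ on $[\alpha_1,\alpha_2]$. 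Hence for $\xi\in[\alpha,\alpha_2]$, $g(\xi)\ge f(\xi)\ge\frac{1+\xi}{\xi(1-\alpha)}$ (as $1-\xi\le1-\alpha$), and for $\xi\in[\alpha_2,1]$, monotonicity of $\hat{d}$ gives $g(\xi)\ge g(\alpha_2)=f(\alpha_2)=\frac{1+\alpha_2}{\alpha_2(1-\alpha_2)}\ge\frac{1+\xi}{\xi(1-\alpha)}$ since $\xi\mapsto\frac{1+\xi}{\xi}$ is decreasing. This proves $(\star)$, hence (i). For (ii): as $|\B|<\infty$, $\alpha^*:=\min_{i\in\B}x_i(t_0)>\alpha_1$; if $\alpha^*\ge\alpha_2$ — or if $m(t)$ ever reaches $\alpha_2$ — then (i) with $\alpha=\alpha_2$ already gives $x_i(t)\ge\alpha_2$ thereafter and we are done, so assume $m(t)<\alpha_2$ for all $t\ge t_0$. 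Then $m(t)\in[\alpha^*,\alpha_2)\subseteq[\alpha_1,\alpha_2]$, and (i) applied with $\alpha=m(t)$ yields $x_i(t+1)\ge m(t)$ for all $i$, i.e. $m(t+1)\ge m(t)$; so $m(t)\nearrow\ell$ with $\ell\in[\alpha^*,\alpha_2]$.

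For (iii), suppose $\ell<\alpha_2$; then $\ell\in(\alpha_1,\alpha_2)$, so $f(\ell)<g(\ell)$, i.e. $\ell(1-\ell)-\hat{d}(\ell)(1+\ell)/\gamma_\B>0$. By continuity choose $\epsilon\in(0,\min(\ell-\alpha_1,\alpha_2-\ell))$ with $c:=(\ell-\epsilon)(1-\ell-\epsilon)-\hat{d}(\ell-\epsilon)(1+\ell+\epsilon)/\gamma_\B>0$, and $T$ with $m(t)>\ell-\epsilon$ for $t\ge T$. The splitting of (i) then shows: for $t\ge T$ and any $i$ with $x_i(t)\le\ell+\epsilon$, $x_i(t+1)\ge x_i(t)+c\sum_{j\in\B}w_{ij}(t)$ — an agent sitting near the common floor $\ell$ is pushed strictly upward at a rate proportional to its current interaction weight inside $\B$. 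Since $\sum_t\sum_{j\in\B}w_{ij}(t)=\infty$ for all $i\in\B$ by connectedness, no agent can remain in $[\ell-\epsilon,\ell+\epsilon]$ indefinitely, contradicting $m(t)\to\ell$. I expect making this last step rigorous to be the main obstacle, precisely because the agent attaining $m(t)$ varies with $t$ and an agent above $\ell+\epsilon$ may be dragged back toward $\ell-\epsilon$ by outside influence; the resolution is the standard book-keeping for DeGroot-type averaging under persistent connectivity — pass to a subsequence along which a fixed agent $i^*$ attains the minimum, note its increments are eventually bounded below by $c\sum_{j\in\B}w_{i^*j}(t)\ge0$, and sum these against the divergent weight series to push $x_{i^*}$ past $\ell$. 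Finally, Proposition~\ref{polarization_result_Second_model} is the specialization $\hat{d}\equiv d$ (the $\beta\to0$ regime, in which \eqref{alphagamma3} becomes \eqref{alphagamma2} and two roots exist iff $\gamma_\B/d>3+2\sqrt2$), Proposition~\ref{polarization_result} is the further case $d=1$, and Proposition~\ref{prop:polarization_result} is the limit $\gamma_\B\to\infty$, for which $\alpha_1=0$ and $\alpha_2=1$.
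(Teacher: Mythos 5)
Your Steps (i) and (ii) are correct and amount to the same invariance-plus-monotonicity argument as Steps 1--3 of the paper's proof of Theorem~\ref{general PC result} (the paper routes the one-step bound through an auxiliary monotone map $f$ and a worst-case comparison vector $y$, whereas you bound each summand directly and optimize over $A=\sum_{j\in\B}w_{ij}\in[0,1]$; both work). The gap is in Step (iii), and it is exactly where you suspect. The inequality $x_i(t+1)\ge x_i(t)+c\sum_{j\in\B}w_{ij}(t)$ is only available while $x_i(t)\le\ell+\epsilon$, and connectedness only guarantees divergence of $\sum_t\sum_{j\in\B}w_{ij}(t)$ over \emph{all} times. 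Your proposed repair --- pass to a subsequence $t_k$ along which a fixed $i^*$ attains the minimum and sum its increments --- does not close the argument: the partial sums $\sum_k\sum_{j\in\B}w_{i^*j}(t_k)$ taken only over the subsequence need not diverge, and between subsequence times $x_{i^*}$ may sit above $\ell+\epsilon$, where its increments are no longer bounded below by $c\sum_{j\in\B}w_{i^*j}(t)$ (they can be negative), so the telescoping fails.

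The paper closes this with a two-part ``escape and persist'' argument for which you already have all the ingredients. Escape: fix $i\in\B$ and suppose $x_i(t)\le\ell+\epsilon$ for \emph{every} $t>T$; then the increment bound holds at every step, so summing against the full divergent series forces $x_i$ to grow without bound, a contradiction; hence there is some $t_i>T$ with $x_i(t_i)>\ell+\epsilon$. Persistence: for $t>T$ the set $\{x_i\ge\ell+\epsilon\}$ is one-step invariant for agent $i$ \emph{given} that the rest of $\B$ sits above $\ell-\epsilon$ --- this is your computation from (i) rerun with the asymmetric floor $y_i=\ell+\epsilon$, $y_j=\ell-\epsilon$ for $j\in\B\setminus\{i\}$, and $y_j=-1$ outside $\B$, and it goes through for the same sufficiently small $\epsilon$ because the defining inequality is strict at $\ell\in(\alpha_1,\alpha_2)$. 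Together these give $x_i(t)\ge\ell+\epsilon$ for all $t\ge t_i$ and every $i\in\B$, hence $m(t)\ge\ell+\epsilon$ eventually, contradicting $m(t)\nearrow\ell$. With Step (iii) repaired this way your proof coincides with the paper's (its Steps 4--6), and your specializations to Propositions~\ref{polarization_result_Second_model}, \ref{polarization_result}, and \ref{prop:polarization_result} at the end are the same ones the paper uses.
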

    
    \begin{proof}
        As it will be argued in Subsection~\ref{rest of proofs}, Proposition~\ref{semi-general PC result} is a special case of Theorem~\ref{general PC result}, stated later on in this section and proved in great detail in Subsection~\ref{main proof}. 
    \end{proof}
    
    The statement of Proposition~\ref{semi-general PC result} is different from those of Propositions~\ref{polarization_result} and \ref{polarization_result_Second_model} in that a succinct condition, such as \eqref{low1} and \eqref{low2}, under which \eqref{alphagamma3} is guaranteed to have solutions has not been provided. However, given any $d$ and $\beta$, it is straightforward to verify whether such solutions exist. We should also point out that a larger $\gamma_\B$, smaller $d$, and smaller $\beta$, all work in favor of \eqref{alphagamma3} having solutions for $\alpha$.
    
    \begin{remark}\label{d=0}
        In view of Proposition~\ref{semi-general PC result}, the marginal case $d=0$ can be interpreted to represent the ``ultimate'' group polarization. More precisely, given $d=0$, \eqref{alphagamma3} will have two positive solutions, $\alpha_1 < 1$ and $\alpha_2 = 1$, with the latter solution indicating the convergence of the opinions within the bubble to one of the very most extreme values $+1$ or $-1$. 
    \end{remark}
    
   The two degrees of freedom incorporated in the discounting function \eqref{discounting function- alpah and beta} can indicate sensitivity to an issue being discussed among the individuals. The higher the sensitivity to the issue for a specific population, the more intense the population acts in a biased manner towards it (smaller $d$ or $\beta$), the more probable/intense the polarization of opinions on that issue.
    
    Finally, Proposition~\ref{semi-general PC result} can be extended as follows to any discounting function restricted to conditions \eqref{d 1} and \eqref{relaxed d} for a non-increasing function $\hat{d}_i$.
    
    \begin{theorem}\label{general PC result}
        Given the PC dynamics \eqref{ModifiedModelPart2}, with the discounting function $d_i$ satisfying \eqref{d 1} as well as \eqref{relaxed d} for a non-increasing $\hat{d}_i:[0,1]\rightarrow [0,1]$, let a connected subset $\B\subseteq \V$ of agents have the bubble number $\gamma_\B$. Assume that inequality
        \begin{equation}
        \label{alphagamma4}
            \frac{1+\alpha}{\alpha(1-\alpha)} < \frac{\gamma_\B}{\hat{d}_i(\alpha)}
        \end{equation}
    is satisfied for any $i$ and $\alpha \in (\alpha_1,\alpha_2) \subseteq (0,1)$. If, for some $t_0$, it happens that
        \begin{equation}\label{ass alpha1}
            x_i(t_0) > \alpha_1,~\forall i \in \B,
        \end{equation}
        then we have
        \begin{equation}\label{st for alpha2}
            \liminf_{t \rightarrow \infty}{x_i(t)} \geq \alpha_2,~\forall i \in \B.
        \end{equation}
    \end{theorem}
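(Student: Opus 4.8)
The plan is to prove Theorem~\ref{general PC result} by a single-step ``push-up'' estimate that is then iterated, exactly as one proves convergence for DeGroot-type dynamics, but with the bubble number $\gamma_\B$, the discount bound \eqref{relaxed d}, and row-substochasticity doing the bookkeeping. By the sign symmetry of the PC dynamics (it commutes with $x\mapsto -x$, and $d_i$ enters only through whether $\sgn(x_i)=\sgn(x_j)$ or through $\hat d_i(|x_i|)$), it suffices to treat the case where $\B$ advocates the position, i.e.\ $x_i(t_0)>\alpha_1\ge 0$ for all $i\in\B$, and to prove $\liminf_t x_i(t)\ge\alpha_2$; the disapproval case follows by reflection. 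Throughout, for $i\in\B$ write $A_i(t):=\sum_{j\in\B}w_{ij}(t)$ and $B_i(t):=\sum_{j\in\V\setminus\B}w_{ij}(t)$, so the bubble number gives $A_i(t)\ge\gamma_\B B_i(t)$ and row-substochasticity gives $A_i(t)+B_i(t)\le 1$, hence $B_i(t)\le 1/(1+\gamma_\B)$. Set $\underline x(t):=\min_{i\in\B}x_i(t)$ and $g(s):=\tfrac{1+s}{s(1-s)}$, so that \eqref{alphagamma4} reads $\hat d_i(s)\,g(s)<\gamma_\B$ for $s\in(\alpha_1,\alpha_2)$.

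The engine is the following one-step bound. Fix $i\in\B$ with $x_i(t)=\beta$ and suppose $x_j(t)\ge c>0$ for all $j\in\B$. For $j\in\B$ we have $\sgn(x_j)=\sgn(x_i)=+1$, so $d_i=1$ and the corresponding term equals $w_{ij}x_j(1-\beta)\ge c(1-\beta)w_{ij}\ge 0$ (the resonance effect); for $j\in\V\setminus\B$, using $|x_j|\le 1$ and \eqref{relaxed d}, the term is at least $-\hat d_i(\beta)(1+\beta)w_{ij}$. Summing,
\[
\Delta x_i(t)\ \ge\ c(1-\beta)\,A_i(t)\ -\ \hat d_i(\beta)(1+\beta)\,B_i(t).
\]
Since the right side is nondecreasing in $A_i$ and nonincreasing in $B_i$, substituting the extreme admissible values $A_i=\gamma_\B B_i$, $B_i=1/(1+\gamma_\B)$ and rearranging shows that $x_i(t+1)\ge c$ is implied by the scalar inequality
\[
h^c_i(\beta)\ :=\ \beta\big[\gamma_\B(1-c)+1\big]\ -\ c\ -\ \hat d_i(\beta)(1+\beta)\ \ge\ 0,\qquad \beta\in[c,1].
\]
I would verify this in two steps: first $h^c_i(\beta)\ge h^c_i(c)$, because $\hat d_i$ nonincreasing yields $\hat d_i(\beta)(1+\beta)-\hat d_i(c)(1+c)\le\hat d_i(c)(\beta-c)\le\beta-c$, whence $h^c_i(\beta)-h^c_i(c)\ge\gamma_\B(1-c)(\beta-c)\ge 0$ --- this is the only place the monotonicity hypothesis on $\hat d_i$ is used; second, $h^c_i(c)=c(1-c)\big[\gamma_\B-\hat d_i(c)\,g(c)\big]\ge 0$, which is precisely \eqref{alphagamma4} for $c\in(\alpha_1,\alpha_2)$ (and holds at $c=\alpha_2$ by the one-sided limit, again since $\hat d_i$ is nonincreasing).

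Feeding $c=\underline x(t)$ into this bound gives forward invariance and monotonicity: if $\underline x(t)\in(\alpha_1,\alpha_2]$ then $x_i(t+1)\ge\underline x(t)$ for every $i\in\B$, so $\underline x$ is nondecreasing as long as it lies in $(\alpha_1,\alpha_2]$; applying it instead with $c=\alpha_2$ (and the crude bound $x_j(t)\ge\alpha_2$ for $j\in\B$) shows $\{x:\ x_i\ge\alpha_2\ \forall i\in\B\}$ is forward invariant. Hence, from $\underline x(t_0)>\alpha_1$, either $\underline x$ reaches $\alpha_2$ in finite time and stays $\ge\alpha_2$ forever (done), or $\underline x(t)\uparrow L$ for some $L\le\alpha_2$ with $\underline x(t)<\alpha_2$ for all $t$, and it remains only to exclude $L<\alpha_2$.

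This last step is the main obstacle. Assuming $L<\alpha_2$, we have $\underline x(t)\in[\underline x(t_0),L]\subset(\alpha_1,\alpha_2)$ for all $t$; on this compact interval, \eqref{alphagamma4} together with $\hat d_i$ nonincreasing gives $\gamma_\B-\hat d_i(s)g(s)\ge\delta_0>0$ uniformly over $i\in\B$ (immediate for the candidate $\hat d_i(|x|)=1-(1-d)|x|^{\beta}$; in general a mild regularity of $\hat d_i$, e.g.\ left-continuity, is invoked). Specialising the one-step bound to the minimising agent $i^\star(t)$ (so $\beta=\underline x(t)$) then upgrades $x_{i^\star(t)}(t+1)\ge\underline x(t)$ to a genuine gain
\[
x_{i^\star(t)}(t+1)\ \ge\ \underline x(t)\ +\ \rho_0\,A_{i^\star(t)}(t),\qquad \rho_0:=\frac{\delta_0}{\gamma_\B}\min_{s\in[\underline x(t_0),L]} s(1-s)\ >\ 0.
\]
The catch is that $i^\star(t)$ need not minimise at time $t+1$, so this does not by itself lower-bound $\underline x(t+1)-\underline x(t)$. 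I would close the gap using connectedness of $\B$: since $\sum_{j\in\B}\sum_{t}w_{ij}(t)=\infty$ for every $i\in\B$, the accumulated in-bubble weight $\sum_t A_i(t)$ diverges, and an agent whose value lingers near the bubble minimum is then influenced infinitely often, and with divergent total weight, by bubble neighbours that are bounded away above from $L$ (recall every $x_j(t)\ge\underline x(t_0)>\alpha_1$, and the net contribution of such neighbours is upward). Via an auxiliary lemma on such substochastic dynamics --- of the type deferred to the proofs section --- this forces the bubble minimum to rise by a uniformly positive amount over suitably chosen time windows, so $\underline x(t)\to L$ would make the cumulative rise diverge, contradicting $\underline x\le L$. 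Therefore $L\ge\alpha_2$, i.e.\ $\liminf_t x_i(t)\ge\alpha_2$ for all $i\in\B$, and the disapproval case follows by the reflection above. Propositions~\ref{polarization_result_Second_model}, \ref{polarization_result} and \ref{prop:polarization_result} then drop out by taking $\hat d_i\equiv d$ or $\hat d_i\equiv 1$ (and $\gamma_\B=\infty$ for complete isolation), since $\gamma_\B>(3+2\sqrt 2)d$ is exactly the threshold for \eqref{alphagamma4} to admit the interval $(\alpha_1,\alpha_2)$, with $\alpha_1\to 0$, $\alpha_2\to 1$ in the extreme-isolation and $d\to 0$ limits.
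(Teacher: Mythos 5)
Your setup is sound and, up to the point where an interior limit must be excluded, it matches the paper's argument in substance: the scalar reduction $h^c_i(\beta)\ge 0$ is a correct, self-contained repackaging of what the paper obtains via a comparison map and a monotonicity lemma; the conclusions that $\underline{x}(t)=\min_{i\in\B}x_i(t)$ is nondecreasing while in $(\alpha_1,\alpha_2]$, that the region where all bubble opinions are at least $\alpha_2$ is forward invariant (via the one-sided limit of \eqref{alphagamma4} at $\alpha_2$), and hence that $\underline{x}(t)\uparrow L\le\alpha_2$, are all correct. Your caveat that a mild left-continuity of $\hat d_i$ is needed for a uniform gap is fair; the paper makes the same implicit assumption when it sends $\epsilon\to 0$ in its condition \eqref{small espsilon}.

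The genuine gap is the final step, ruling out $L<\alpha_2$. You correctly identify the obstacle --- the minimizing agent changes from step to step, so the per-step gain $\rho_0 A_{i^\star(t)}(t)$ does not bound $\underline{x}(t+1)-\underline{x}(t)$ --- but you then appeal to an unstated ``auxiliary lemma on substochastic dynamics'' that would make the minimum rise over time windows. No such lemma is supplied, and the claim does not follow from what you have proved: an agent could cross $L+\epsilon$ at some time and, for all you have shown, fall back toward $L$ immediately afterward, so the minimum never rises. The paper closes this with a two-part \emph{per-agent} argument of which your sketch contains only half. Part (i): fix $\epsilon$ so that \eqref{small espsilon} holds with $z^*=L$; if $x_i(t)\le L+\epsilon$ for all large $t$, the one-step drift of $x_i$ is bounded below by a positive constant times $\sum_{j\in\B}w_{ij}(t)$, whose sum diverges by connectedness, forcing $x_i$ out of $[-1,1]$ --- so every agent crosses $L+\epsilon$ at some finite time. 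This is what your ``divergent total weight'' remark gestures at. Part (ii), which you omit entirely: once $x_i(t)\ge L+\epsilon$ at a time when every other bubble agent exceeds $L-\epsilon$, then $x_i(t+1)\ge L+\epsilon$ --- an invariance statement at the \emph{asymmetric} levels $L+\epsilon$ for $i$ and $L-\epsilon$ for its neighbours, provable with exactly your $h$-type estimate but with the two levels decoupled. Only with (i) and (ii) together do all agents simultaneously sit above $L+\epsilon$ from some time on, giving $\underline{x}(t)\ge L+\epsilon>L$ and contradicting $\underline{x}(t)\le L$. Without part (ii), your argument does not close.
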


\begin{proof}
    The proof of Theorem~\ref{general PC result} is given in Subsection~\ref{main proof}.
\end{proof}

\section{Numerical Experiments}
\label{Numerical Examples}
    In this section, we illustrate the behaviors of PC-lite dynamics \eqref{model} and PC dynamics \eqref{ModifiedModelPart2} through numerical examples. In all examples, a fixed Erd\"os–R\'enyi random graph \cite{erdHos1960evolution} embodies the underlying graph of the network. It consists of $|\V|=500$ nodes and, for each pair of nodes $i,j \in \V$, an edge $e_{ij}$ exists with the uniform probability $p_{G}=0.06$, independently of other edges. Each edge is then independently activated at any time step with the uniform probability $p_L = 0.8$, which results in asynchronous interactions in the network. Numerical examples of the PC-lite model and the PC model are provided in subsections~\ref{Numerical Examples of model 1} and \ref{Numerical Examples of model 2} below, respectively.
\subsection{Numerical Examples for the PC-lite Model}
\label{Numerical Examples of model 1}
    For the numerical examples of PC-lite dynamics \eqref{model}, we consider two cases, (i) a case where there are three bubbles within the population, while some agents do not belong to any of these bubbles, and (ii) a case where the entire population is divided into two bubbles. Other parameters used in the simulations of these two cases, including the size of the bubbles, their bubble numbers, and their respective $\alpha_1$ and $\alpha_2$ values, are given in Tables~\ref{table 1} and \ref{table 2}. For both cases, the initial opinions of the agents within each bubble are selected according to a normal probability distribution function with near-zero mean and low variance, as specified in Tables~\ref{table 1} and \ref{table 2} with $\mu$ and $\sigma^2$ representing the corresponding mean and variance, truncated to the range $[-1,1]$. The initial opinions of the agents outside the bubbles in the first case are selected according to a normal probability distribution function with zero mean and variance equal to 0.11. Finally, we note that the weight values at any time step are generated randomly but scaled in such a way not to violate the bubble numbers listed in Tables~\ref{table 1} and \ref{table 2}. More precisely, given an agent inside a bubble, the weights indicating the influence over her from outside are uniformly scaled down. 
    \begin{table}[h!]
    \centering
    \caption{Parameters of the PC-lite model simulation (Figure~\ref{PC-lite A})}
    \begin{tabular}{cccc}
     \hline
     Parameter &  Bubble 1 & Bubble 2 & Bubble 3 \\
     \hline
     $|\mathcal{B}|$      &   159  & 127 & 90\\ 
     $\gamma_\mathcal{B}$ & 8  & 6  & 12 \\ 
     $\alpha_1$           & 0.18  &  0.333  & 0.102 \\
     $\alpha_2$           & 0.695  &  0.5  &  0.814\\
     $\mu$                & 0.04 & -0.02 & -0.01 \\
     $\sigma^2$           & 0.1 & 0.09 & 0.08\\ 
    \hline
    \end{tabular} 
    \label{table 1}
    \end{table}
    \begin{table}[h!]
    \centering
    \caption{Parameters of the PC-lite model simulation (Figure~\ref{PC-lite B1})}
    \begin{tabular}{ccc}
     \hline
     Parameters &  Bubble 1 & Bubble 2 \\
     \hline
     $|\mathcal{B}|$      &   261  & 239\\ 
     $\gamma_\mathcal{B}$ & 9  & 6\\ 
     $\alpha_1$           & 0.15  &  0.333 \\
     $\alpha_2$           & 0.738  &  0.5\\
     $\mu$                 &  0.06  &    -0.08   \\
     $\sigma^2$            &  0.25  &   0.3 \\
    \hline
    \end{tabular} 
    \label{table 2}
    \end{table}

    Figure~\ref{PC-lite A} demonstrates the evolution of opinions under the PC-lite dynamics for the first case. It confirms the statement of Proposition~\ref{polarization_result} that if the opinions of the agents in Bubble~1 become greater than $\alpha_1$ in finite time, they will in the long run become more extreme than $\alpha_2$. The same conclusion can be drawn for Bubble~2 and Bubble~3. Furthermore, Bubble~3 appears to show a stronger group polarization than Bubble~1 and Bubble~2, which is consistent with it having a larger $\alpha_2$ than the other bubbles, that in view of Figure~\ref{Alpha, model 1} is a result of its relatively large bubble number.

\begin{figure}[t]
\centering
\includegraphics[width=0.5\linewidth]
{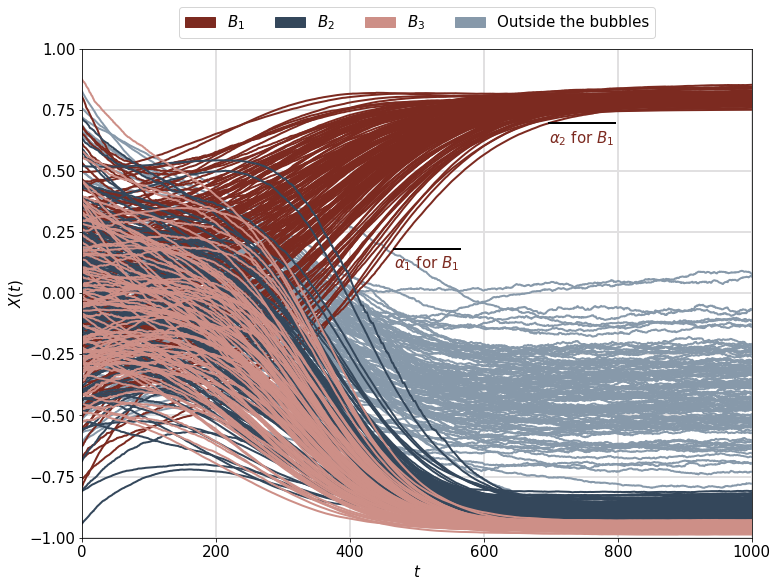}
\caption{Opinion evolution according to PC-lite dynamics \eqref{model} with parameters specified in Table~\ref{table 1}.
}
\label{PC-lite A}
\end{figure}   
    
    Figure~\ref{PC-lite B1} shows the evolution of opinions under the PC-lite dynamics for the second case. While it confirms the statement of Proposition~\ref{polarization_result} like the previous simulation, it is designed to resemble the opinion polarization of the US population depicted in Figure~\ref{US polarization}. More specifically, it shows how the medians of the opinions in the two bubbles diverge over time. The distribution of opinions at time steps 0 and 200 are separately drawn in Figure~\ref{PC-lite B3}, bearing a resemblance to Figure~\ref{US polarization}.


\begin{figure}[t]
\centering
\includegraphics[width=0.5\linewidth]
{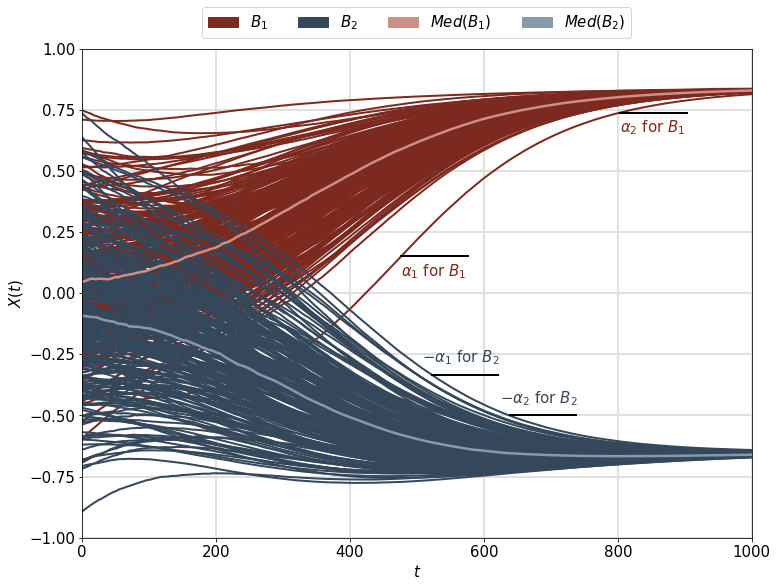} 
\caption{Opinion evolution according to PC-lite dynamics \eqref{model} with parameters specified in Table~\ref{table 2}.
}
\label{PC-lite B1}
\end{figure}

    \begin{figure}[t]
    \centering 
    \begin{subfigure}[h]{0.4\textwidth}
    \centering 
    \includegraphics[width=0.95\linewidth]
    {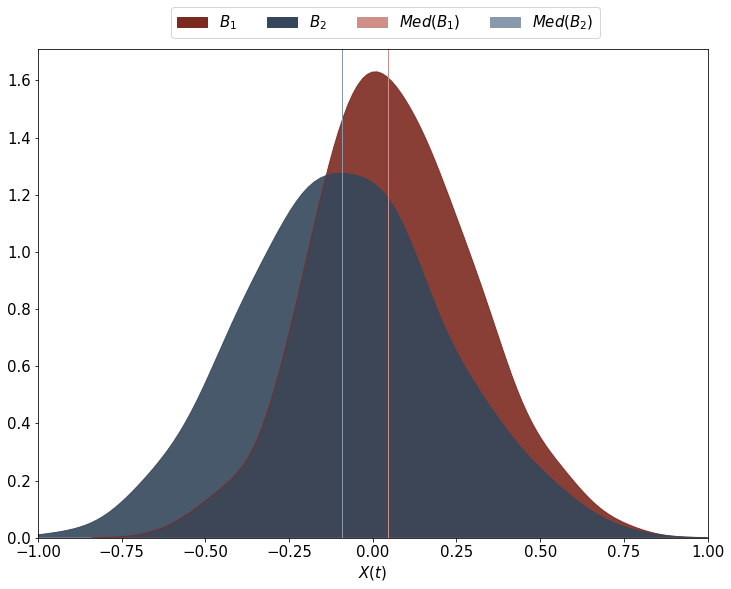} 
    \label{normal snapshot a}
    \caption{}
    \end{subfigure}
    %
    \hspace{10px}
    \begin{subfigure}[h]{0.4\textwidth}
    \centering 
    \includegraphics[width=0.95\linewidth]
    {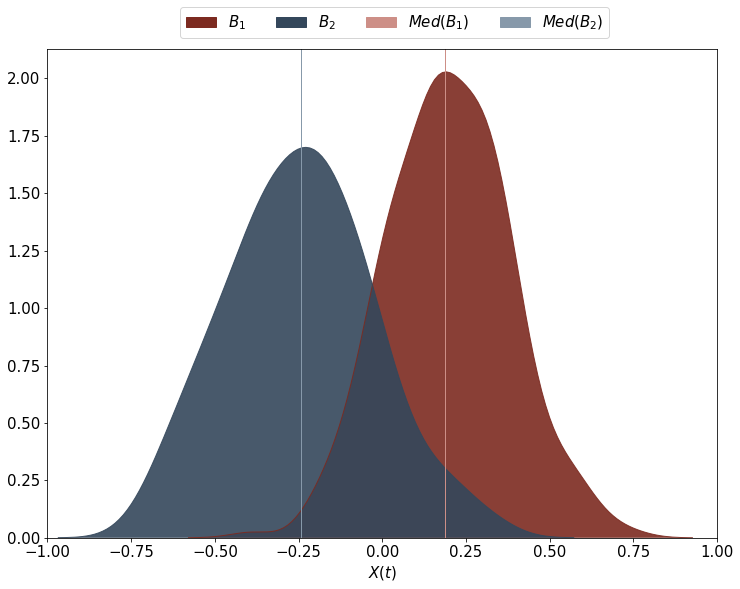}
    \label{normal snapshot b}
    \caption{}
    \end{subfigure}
    %
    \hspace{10px}
    %
    
    \caption{Opinion distribution in each bubble appearing in Figure~\ref{PC-lite B1} at
    (a) $t=0$ and (b) $t=200$.}
    \label{PC-lite B3}
    \end{figure}

\subsection{Numerical Examples for the PC Model}
\label{Numerical Examples of model 2}
    For the numerical examples of PC dynamics \eqref{ModifiedModelPart2}, we consider a case where the network is divided into two bubbles, the parameters of which are given in Table~\ref{table 3}. Figure~\ref{PC fig} demonstrates the evolution of opinions under the PC dynamics \eqref{ModifiedModelPart2} where
    \begin{equation}
        d_i(x_i,x_j) = \begin{cases}
            1 & \text{if~}\sgn(x_i)= \sgn(x_j)\\
            1-(1-d)|x_i|^{\beta}& \text{if~}\sgn(x_i)\neq \sgn(x_j)
        \end{cases}
    \end{equation} with $d=0.4$ and $\beta=0.4$. One can observe in Figure~\ref{PC fig} that all the opinions of agents in the bubbles will asymptotically exceed their respective $\alpha_2$'s in magnitude, confirming Proposition~\ref{polarization_result_Second_model}.
    


    \begin{table}[h!]
        \centering
        \caption{Parameters of the PC model simulation (Figure \ref{PC fig})}
    \begin{tabular}{ccc}
         \hline
         Parameters & Bubble 1 & Bubble 2 \\
         \hline
         $|\mathcal{B}|$ & 226 & 274\\ 
         $\gamma_\mathcal{B}$ & 3.5 & 4\\ 
         $\alpha_1$ & 0.3721  &  0.2869\\
         $\alpha_2$ &  0.6287 &  0.7149 \\
         $\mu$      &   0.09     &  -0.08 \\
         $\sigma^2$ &  0.1      &   0.11\\
         \hline
    \end{tabular} 
    \label{table 3}
    \end{table}

    \begin{figure}[ht!]
     \centering
      \includegraphics[width=0.5\linewidth]
        {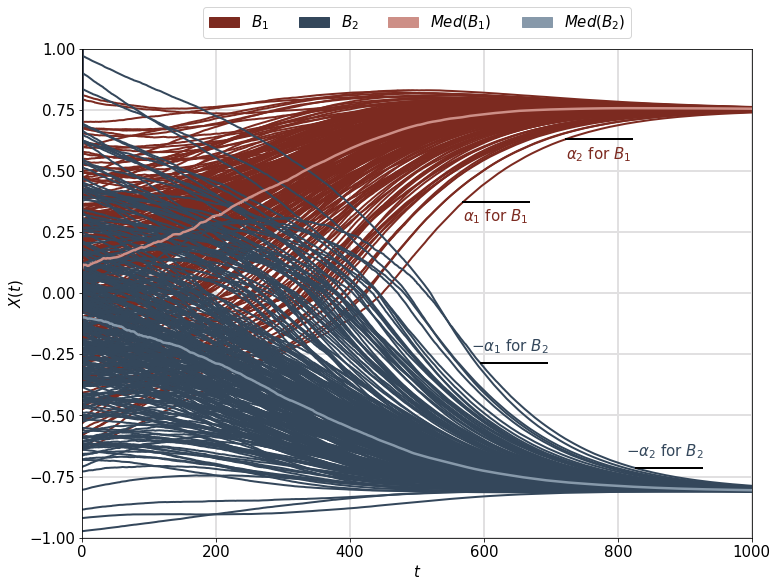}
        \caption{Opinion evolution according to PC dynamics \eqref{ModifiedModelPart2} with parameters specified in Table~\ref{table 3}.}
    \label{PC fig}
    \end{figure}



\section{Conclusion}
\label{Conclusion}

    In this paper, we proposed the PC-lite and PC models of opinion dynamics based on an approach that views an opinion via its intensity and its direction. We established a result on the occurrence of opinion polarization in a social bubble, referring to a group of individuals who are highly cohesive and isolated from outside influence. Both of the models developed are inspired by the notion of groupthink, widely studied in the socio-psychological literature. We also justified our results using numerical simulations.
    
    The ultimate goal of the proposed models is to analyze, predict, and possibly intervene in the process of group polarization and opinion polarization in a population. While the analysis and prediction goals were discussed in this work, the intervention techniques will remain as part of future work. As another future research direction, it will be interesting to study the multidimensional version of the proposed models that captures the simultaneous evolution of opinions on a multitude of correlated topics.
\section{Proofs}\label{Proofs}

    This section is composed of a subsection containing some preliminary definitions and lemmas that is integral in the proof of Theorem~\ref{general PC result}, an entire subsection detailing the proof of Theorem~\ref{general PC result}, and another subsection on the derivations of Propositions~\ref{prop:polarization_result}, \ref{polarization_result}, \ref{polarization_result_Second_model}, and \ref{semi-general PC result} from Theorem~\ref{general PC result}.
    
    \subsection{Preliminaries to the Proof of Theorem~\ref{general PC result}} \label{preliminary proof} 
    
    We recall that the discounting function in Theorem~\ref{general PC result} is assumed to satisfy \eqref{d 1} and \eqref{relaxed d} for non-increasing functions $\hat{d}_i$. Since the marginal case of \eqref{relaxed d} will prove to be of great importance, we define an auxiliary discounting function $d':[-1,1]^2 \rightarrow [0,1]$ by
    \begin{equation}
        d'_i(x_i,x_j) = 1 ~ \text{if}~ \sgn(x_i)=\sgn(x_j),
    \label{d' 1}
    \end{equation}
    \begin{equation}
        d'_i(x_i,x_j) = \hat{d}_i(|x_i|)~\text{if}~\sgn(x_i)\neq\sgn(x_j),
        \label{relaxed d equality}
    \end{equation}
    where \eqref{d' 1} is identical to \eqref{d 1}, while \eqref{relaxed d equality} is the marginal case of \eqref{relaxed d}. For any $y \in [-1,1]^n$, it should be clear that
    \begin{equation}
        d_i(y_i,y_j) \leq d'_i(y_i,y_j),~\forall i,j.
    \label{dd'}
    \end{equation}
    We also define a function $f:[-1,1]^n \rightarrow [-1,1]^n$ with its $i$th coordinate formulated as
        \begin{equation}
            f_i(y)\triangleq y_i+\sum_{j \neq i} \Big[ d'_i(y_i,y_j) w_{ij}|y_j|\left( \sgn{(y_j)}-y_i \right) \Big],
            \label{f def}
        \end{equation}
    There is a slight abuse of notation in \eqref{f def}, in that $w_{ij}$ is in general a function of time, while $f$ does not seem to be treated as one. This will not cause a problem since the time index will be fixed whenever $f_i$ will show up in future arguments. In particular, $f_i(x(t))$ can be expressed as
    \begin{equation}
        f_i(x(t))= x_i(t) + \sum_{j \neq i} \Big[d'_i(x_i(t),x_j(t)) w_{ij}(t) |x_j(t)| (\sgn(x_j(t)) - x_i(t))\Big].
    \label{f dyn}
    \end{equation}
    In contrast, according to \eqref{ModifiedModelPart2},
    \begin{equation}
        x_i(t+1) = x_i(t) + \sum_{j \neq i} \Big[d_i(x_i(t),x_j(t)) w_{ij}(t) |x_j(t)| (\sgn(x_j(t)) - x_i(t))\Big].
    \label{pc alt}
    \end{equation}
    The following lemmas will be used in the proof of Theorem~\ref{general PC result}.
    
    \begin{lemma}\label{f dyn pc alt}
        For an arbitrary agent $i$, if $x_i(t) \geq 0$, then $x_i(t+1) \geq f_i(x(t))$.
    \end{lemma}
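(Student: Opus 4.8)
The plan is to obtain the inequality by a direct term‑by‑term comparison of the two update rules \eqref{f dyn} and \eqref{pc alt}, whose only difference lies in the discounting factors $d_i$ versus $d'_i$. Subtracting \eqref{f dyn} from \eqref{pc alt}, the base terms $x_i(t)$ cancel and, since the weights $w_{ij}$ and the neighbours' opinions $x_j$ are common to both sums, we are left with
\begin{equation}
x_i(t+1)-f_i(x(t))=\sum_{j\neq i}\Big[\big(d_i(x_i,x_j)-d'_i(x_i,x_j)\big)\,w_{ij}\,|x_j|\,\big(\sgn(x_j)-x_i\big)\Big],
\end{equation}
with the time argument suppressed as in the paper's convention. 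It therefore suffices to prove that every summand on the right‑hand side is non‑negative.

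Next I would run a short case analysis over $j\neq i$. The factor $w_{ij}|x_j|$ is always $\ge 0$, and if $x_j=0$ the summand vanishes, so assume $x_j\neq 0$. If $\sgn(x_i)=\sgn(x_j)$, then \eqref{d 1} and \eqref{d' 1} give $d_i(x_i,x_j)=d'_i(x_i,x_j)=1$, so the summand is zero. The remaining possibility is $\sgn(x_i)\neq\sgn(x_j)$; here \eqref{dd'} (equivalently, \eqref{relaxed d} against \eqref{relaxed d equality}) yields $d_i(x_i,x_j)-d'_i(x_i,x_j)\le 0$. Using the hypothesis $x_i\ge 0$ together with the fact that $x_j\neq 0$ has sign opposite to $x_i$, agent $j$ must hold a strictly negative opinion, i.e. $\sgn(x_j)=-1$, whence $\sgn(x_j)-x_i=-1-x_i\le 0$. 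The summand is thus a product of three factors with signs $\le 0$, $\ge 0$, $\le 0$, hence $\ge 0$; summing over $j$ gives $x_i(t+1)\ge f_i(x(t))$.

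I expect the only delicate point to be the sign bookkeeping in the last case: the observation that $d_i$ and $d'_i$ can differ \emph{only} on those terms where $\sgn(x_i)\neq\sgn(x_j)$, and that precisely on those terms the hypothesis $x_i\ge 0$ forces $\sgn(x_j)-x_i\le 0$, so that the inequality $d_i\le d'_i$ pushes the actual update \emph{upward} relative to $f_i$ rather than downward. A minor corner to dispatch is the neutral case $x_i=0$: under the reading that a completely neutral agent applies no confirmation‑bias discount (consistently with $\hat d_i(0)=1$ for the candidate in \eqref{discounting function- alpah and beta}), one has $d_i(0,\cdot)=d'_i(0,\cdot)=1$ and the summand is again zero; alternatively this is absorbed by the convention fixing $\sgn(0)$. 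Everything else is routine.
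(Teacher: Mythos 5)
Your proof is correct and follows essentially the same route as the paper's: subtract \eqref{f dyn} from \eqref{pc alt}, observe that the difference reduces to a sum over $j$ of terms $\big(d_i-d'_i\big)w_{ij}|x_j|(\sgn(x_j)-x_i)$, and show each summand is non-negative by a sign case analysis using \eqref{d 1} and \eqref{dd'}. If anything, your write-up is slightly more careful than the paper's, whose proof contains a typo (stating the case $\sgn(x_j(t))=+1$ twice instead of treating $\sgn(x_j(t))=-1$ in the second branch) and tacitly works with $x_i(t)>0$ rather than $x_i(t)\geq 0$, a corner you explicitly dispatch via the neutral-agent convention.
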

    
    \begin{proof}
        From \eqref{f dyn} and \eqref{pc alt},
        \begin{equation}
            x_i(t+1) - f_i(x(t)) = \sum_{j \neq i} \Big[\big(d_i(x_i(t),x_j(t))-d'_i(x_i(t),x_j(t))\big) w_{ij}(t) |x_j(t)| (\sgn(x_j(t)) - x_i(t))\Big].
        \label{diff}
        \end{equation}
    To complete the proof, it is sufficient to show that each summand (term appearing in a summation) in \eqref{diff} is non-negative, which can be done simply by considering the two cases for $\sgn(x_j(t))$. If $\sgn(x_j(t)) = +1$, given the assumption $x_i(t) > 0$, the summand becomes zero since both $d_i$ and $d'_i$ would equal 1. If $\sgn(x_j(t)) = +1$, then from \eqref{dd'}, the summand is non-negative. 
    \end{proof}

    \begin{lemma}\label{increasing_model}
        The function $f$ is non-decreasing, i.e., for any pair of vectors $y^1,y^2 \in [-1,1]^n$,
        \begin{equation}
        \label{inc}
            y^1 \leq y^2 \Rightarrow f(y^1) \leq f(y^2),
        \end{equation}
        where the inequalities in \eqref{inc} are to be understood element-wise.
    \end{lemma}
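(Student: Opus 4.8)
The plan is to prove the (formally stronger) fact that for each agent $i$ the scalar function $f_i$ is non-decreasing in each of its $n$ arguments \emph{separately}; since any inequality $y^1\le y^2$ can be reached by increasing one coordinate at a time, this yields $f(y^1)\le f(y^2)$. Throughout, $i$ and the weights $w_{ij}\ge 0$ are fixed (the time index being frozen, as noted after \eqref{f def}). In the sum defining $f_i$ every term with $y_j=0$ vanishes, and for $y_j\neq 0$ the factor $|y_j|(\sgn(y_j)-y_i)$ equals $|y_j|(1-y_i)$ when $y_j>0$ and $|y_j|(-1-y_i)$ when $y_j<0$.

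\emph{Monotonicity in $y_j$ for $j\neq i$ (the routine case).} With $y_i$ held fixed, $d'_i(y_i,y_j)$ depends on $y_j$ only through $\sgn(y_j)$, so on each of the pieces $y_j\in[-1,0)$, $y_j=0$, $y_j\in(0,1]$ the map $y_j\mapsto d'_i(y_i,y_j)\,|y_j|\,(\sgn(y_j)-y_i)$ is affine in $y_j$ with slope the product of the nonnegative quantities $d'_i(y_i,y_j)\in\{1,\hat{d}_i(|y_i|)\}$ and either $1-y_i\ge 0$ (on $(0,1]$) or $1+y_i\ge 0$ (on $[-1,0)$); hence the slope is $\ge 0$. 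Since this map is $0$ at $y_j=0$, has one-sided limits $0$ there, is $\le 0$ on $[-1,0)$ and $\ge 0$ on $(0,1]$, it is non-decreasing on all of $[-1,1]$; multiplying by $w_{ij}\ge 0$ and summing shows $f_i$ is non-decreasing in each $y_j$, $j\neq i$.

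\emph{Monotonicity in $y_i$ (the main obstacle).} Freeze all $y_j$, $j\neq i$, put $p=\sum_{j:\,y_j>0}w_{ij}|y_j|$ and $q=\sum_{j:\,y_j<0}w_{ij}|y_j|$, so $p,q\ge 0$ and $p+q\le\sum_j w_{ij}\le 1$ by row-substochasticity, and write $\phi(u)$ for $f_i$ as a function of $u:=y_i$. Then $\phi(u)=u+D_+(u)\,p\,(1-u)+D_-(u)\,q\,(-1-u)$, where $D_+(u)=1$ for $u>0$, $D_+(u)=\hat{d}_i(|u|)$ for $u\le 0$, and $D_-(u)=1$ for $u<0$, $D_-(u)=\hat{d}_i(|u|)$ for $u\ge 0$. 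On $(0,1]$ this is $\phi(u)=u(1-p)+p-q\,\hat{d}_i(u)(1+u)$, and for $u_1<u_2$ in $(0,1]$, splitting $\hat{d}_i(u_2)(1+u_2)-\hat{d}_i(u_1)(1+u_1)=(1+u_2)\big(\hat{d}_i(u_2)-\hat{d}_i(u_1)\big)+\hat{d}_i(u_1)(u_2-u_1)$ and using that $\hat{d}_i$ is non-increasing and bounded by $1$ gives $\hat{d}_i(u_2)(1+u_2)-\hat{d}_i(u_1)(1+u_1)\le u_2-u_1$, whence $\phi(u_2)-\phi(u_1)\ge(u_2-u_1)(1-p-q)\ge 0$. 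A mirror-image computation on $[-1,0)$, where $\phi(u)=u(1-q)-q+p\,\hat{d}_i(|u|)(1-u)$, yields the same bound $\phi(u_2)-\phi(u_1)\ge(u_2-u_1)(1-p-q)\ge 0$. Finally at $u=0$ one has $\phi(0)=\hat{d}_i(0)(p-q)$, $\phi(0^-)=\hat{d}_i(0)p-q$, $\phi(0^+)=p-\hat{d}_i(0)q$, and both $\phi(0)-\phi(0^-)=q\big(1-\hat{d}_i(0)\big)$ and $\phi(0^+)-\phi(0)=p\big(1-\hat{d}_i(0)\big)$ are $\ge 0$ since $\hat{d}_i(0)\le 1$; hence $\phi$ is non-decreasing on $[-1,1]$.

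Combining the two cases, $f_i$ is non-decreasing in each coordinate, so $f$ is non-decreasing, proving the lemma. The delicate point — and the only place the bound $p+q\le 1$ is genuinely needed — is the $y_i$-monotonicity: $\phi$ is only piecewise of the form ``affine $+$ $\hat{d}_i$-term'' (the sign of $y_i$ swaps which incoming influences are discounted), the products $\hat{d}_i(|u|)(1\mp u)$ are not individually monotone, and one must additionally rule out a downward jump of $\phi$ at $u=0$ where $\sgn(y_i)$ changes. I expect these case-by-case estimates, rather than any conceptual difficulty, to be the bulk of the proof.
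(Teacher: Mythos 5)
Your proof is correct and follows essentially the same route as the paper's: reduce to monotonicity in one coordinate at a time, handle the off-diagonal coordinate by the sign/affine structure of $|y_j|(\sgn(y_j)-y_i)$, and handle the $y_i$-coordinate by splitting the increment of $\hat{d}_i(\cdot)(1\mp\cdot)$ into a non-positive discount change plus a term controlled by row-substochasticity ($p+q\le 1$, the paper's $1-\sum_{j\neq k}w_{kj}\ge 0$). Your explicit treatment of the potential jump of $\phi$ at $y_i=0$ is a point the paper's case analysis passes over silently, but it does not change the substance of the argument.
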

\begin{proof}
Without loss of generality, we can assume that $y^1$ and $y^2$ differ only in one coordinate, say the $k$th coordinate. For notational convenience, we may write $y_i$ for both $y^1_i$ and $y^2_i$, when $i \neq k$. We show that $f_i(y^1) \leq f_i(y^2)$ for each $i$ by considering the following cases:

{\bf Case 1:} $(i \neq k)$ In this case,
    \begin{align}\label{case 1}
        f_i(y^2) - f_i(y^1)
        & =\left(y_i + \sum_{j\neq i} \Big[d_i(y_i,y^2_j)w_{ij} |y^2_j|(\sgn(y^2_j) - y_i)\Big]\right)\cr
        & -\left( y_i + \sum_{j \neq i} \Big[ d_i(y_i,y^1_j)w_{ij} |y^1_j|(\sgn(y^1_j) - y_i)\Big]\right)\cr
        & =w_{ik}\Big(d_i(y_i,y^2_k)|y^2_k|(\sgn(y^2_k) - y_i)- d_i(y_i,y^1_k)|y^1_k|(\sgn(y^1_k) - y_i)\Big)
    \end{align}
Now, if $\sgn(y^1_k) = \sgn(y^2_k)$, then from \eqref{d 1} and \eqref{relaxed d equality} we have $d_i(y_i,y^1_k) = d_i(y_i,y^2_k)$, which combined with \eqref{case 1} leads to
        \begin{align}\label{case 1 cont}
        f_i(y^2) - f_i(y^1)
        & =w_{ik} d_i(y_i,y^1_k)\Big(|y^2_k|(\sgn(y^2_k) - y_i)- |y^1_k|(\sgn(y^1_k) - y_i)\Big)\cr
        & =w_{ik}d_i(y_i,y^1_k) \Big( (y^2_k - y^1_k) - y_i (|y^2_k|-|y^1_k|) \Big)\cr
        & \geq w_{ik}d_i(y_i,y^1_k)(1-|y_i|)(y^2_k-y^1_k)\cr
        & \geq 0,
    \end{align}
where in the first inequality of \eqref{case 1 cont}, we used the inequality $\big||y^2_k|-|y^1_k|\big|\leq |y^2_k-y^1_k|=y^2_k-y^1_k$. On the other hand, if $\sgn(y^1_k) \neq \sgn(y^2_k)$, given $y^2_k \geq y^1_k$, we have $\sgn(y^1_k) = -1$ and $\sgn(y^2_k) = 1$, which considered together with \eqref{case 1}, immediately results in $f_i(y^2) - f_i(y^1) \geq 0$.

{\bf Case 2:} $(i = k)$ In this case,
    \begin{align}\label{lem f}
        f_i(y^2) - f_i(y^1)
        & = \left(y^2_k + \sum_{j \neq k} \Big[ d_k(y^2_k,y_j)w_{kj} |y_j|(\sgn(y_j) - y^2_k)\Big] \right)\cr
        & -\left( y^1_k + \sum_{j \neq k} \Big[ d_k(y^2_k,y_j)w_{kj} |y_j|(\sgn(y_j) - y^1_k)\Big] \right)\cr   
        & =\left(y^2_k - y^1_k\right) + \sum_{j \neq k} \Big[ w_{kj}|y_j|\Big( d_k(y^2_k,y_j) (\sgn(y_j) - y^2_k) - d_k(y^2_k,y_j) (\sgn(y_j) - y^1_k) \Big)\Big]\cr
        & = \left(y^2_k - y^1_k\right) + \sum_{j \neq k} \Big[ w_{kj}|y_j|\Big(d_k(y^1_k,y_j)(y^1_k - y^2_k) + \big(d_k(y^2_k,y_j)-d_k(y^1_k,y_j)\big)(\sgn(y_j)-y^2_k)\Big)\Big].\cr
        &~
    \end{align}
Considering the two cases $\pm 1$ for $\sgn(y_j)$, and remembering $y^2_k \geq y^1_k$ as well as \eqref{d 1} and \eqref{relaxed d equality}, the term
\begin{equation}
    \big(d_k(y^2_k,y_j)-d_k(y^1_k,y_j)\big)(\sgn(y_j)-y^2_k)
\end{equation}
which appears in the last line of \eqref{lem f}, can be easily shown to be zero or positive. Hence, \eqref{lem f} results in
\begin{align}\label{lem f2}
        f_i(y^2) - f_i(y^1)
        & \geq \left(y^2_k - y^1_k\right) + \sum_{j \neq k} \Big[ w_{kj}|y_j|\Big(d_k(y^1_k,y_j)(y^1_k - y^2_k) \Big)\Big]\cr
        & = \left(y^2_k - y^1_k\right) \left( 1 - \sum_{j \neq k} \Big[ w_{kj}|y_j| d_k(y^1_k,y_j) \Big] \right)\cr
        & \geq \left(y^2_k - y^1_k\right) \left( 1 - \sum_{j \neq k}w_{kj}\right)\cr
        &\geq 0.
\end{align}
\end{proof}





\subsection{Proof of Theorem~\ref{general PC result}} \label{main proof} 

    Defining $z(t) = \min_{i \in \B} x_i(t)$, the inequality \eqref{st for alpha2} holds for each $i\in\B$ if and only if
    \begin{equation}\label{pp1eq}
        \liminf_{t \rightarrow\infty}{z(t)} \geq \alpha_2.
    \end{equation}
    
    To show \eqref{pp1eq}, we first note that, by the assumption of the theorem, that is \eqref{ass alpha1}, we have
    \begin{equation}\label{z1}
        z(t_0) > \alpha_1.
    \end{equation}
    Given an arbitrary but fixed $t$, we then take the following six steps to fully examine $\liminf_{t \rightarrow\infty}{z(t)}$ and show \eqref{pp1eq}.
    
    {\bf Step 1:} We show that the following statement is true:
    \begin{equation}\label{z2}
        \Big(\alpha_1 < z(t) < \alpha_2\Big) ~~\Longrightarrow~~ \Big(z(t+1) \geq z(t)\Big).
    \end{equation}
    To this aim, construct a vector $y$ from $x(t)$ as
    \begin{equation}
        y_i =
        \begin{cases}
            z(t)    & \text{if } i \in \B\\
            -1  & \text{if } i \not\in \B.
        \end{cases}
    \end{equation}
    It should be clear that $x(t) \geq y$. Thus, from Lemma~\ref{increasing_model}, we must have $f(x(t)) \geq f(y)$, and in particular, $f_i(x(t)) \geq f_i(y)$, $\forall i \in \B$. On the other hand, since $x_i(t) \geq 0$ for $i \in \B$, from Lemma~\ref{f dyn pc alt}, we conclue that $x_i(t+1) \geq f_i(x(t))$. Hence, $x_i(t+1) \geq f_i(y)$. Therefore, for each $i \in \B$,
    \begin{align}\label{arg1}
        x_i(t+1)
        & \geq f_i(y)\cr
        & = y_i + \sum_{j \in \V} d'_i(y_i,y_j)w_{ij}(t) |y_j| (\sgn(y_j) - y_i) \cr
        & = y_i + \sum_{j \in \B} d'_i(y_i,y_j)w_{ij}(t) |y_j| (\sgn(y_j) - y_i) \cr
        & + \sum_{j \in \V \backslash \B} d'_i(y_i,y_j)w_{ij}(t) |y_j| (\sgn(y_j) - y_i)\cr
        & = z(t) + \sum_{j \in \B} w_{ij}(t) z(t)(1- z(t)) \cr
        & + \sum_{j \in \V \backslash \B} \hat{d}_i(z(t))w_{ij}(t) (-1-z(t))\cr
        & = z(t) + z(t)(1-z(t))\cr
        &\times\left[ \sum_{j \in \B} w_{ij}(t) - \frac{\hat{d}_i(z(t))(1+z(t))}{z(t)(1-z(t))}\sum_{j \in \V \backslash \B} w_{ij}(t)\right].
    \end{align}    
    Furthermore, since $\alpha_1 < z(t) < \alpha_2$, from the assumption \eqref{alphagamma4} we have
    \begin{equation}
    \label{zgammaB}
        \frac{\hat{d}_i(z(t))(1+z(t))}{z(t)(1-z(t))} < \gamma_\B.
    \end{equation}
    Combining \eqref{arg1} and \eqref{zgammaB} implies that
    \begin{align}
    \label{arg1'}
        x_i(t+1)
        & \geq z(t) + z(t)(1-z(t))\left[ \sum_{j \in \B} w_{ij}(t) - \gamma_\B \sum_{j \in \V \backslash \B} w_{ij}(t)\right]\cr
        & \geq z(t).
    \end{align}
    Consequently, $z(t+1) \geq z(t)$, which completes the proof of statement \eqref{z2}.
    
    {\bf Step 2:} We show that
    \begin{equation}\label{z3}
        \Big(z(t) \geq \alpha_2\Big) ~\Rightarrow~ \Big(z(t+1) \geq \alpha_2\Big).
    \end{equation}
    To prove statement \eqref{z3}, we first point out that since \eqref{alphagamma4} is assumed to hold for any $i$ and $\alpha \in (\alpha_1,\alpha_2)
    $, one can write
    \begin{equation}\label{alpha2magic}
        \hat{d}_i(\alpha_2) \leq \lim_{\alpha \rightarrow \alpha_2}\hat{d}_i(\alpha) \leq \lim_{\alpha \rightarrow \alpha_2}\frac{\alpha_2(1-\alpha_2)\gamma_\B}{1+\alpha_2} = \frac{\alpha_2(1-\alpha_2)\gamma_\B}{1+\alpha_2},
    \end{equation}
where the first inequality of \eqref{alpha2magic}, as well as the existence of $\lim_{\alpha \rightarrow \alpha_2}\hat{d}_i(\alpha)$, are both immediate results of $\hat{d}_i$ being non-increasing, while the second inequality of \eqref{alpha2magic} is a direct consequence of \eqref{alphagamma4}. Rewriting \eqref{alpha2magic}, we have
    \begin{equation}\label{alpha2magic2}
        \frac{1+\alpha_2}{\alpha_2(1-\alpha_2)} \leq \frac{\gamma_\B}{\hat{d}_i(\alpha_2)}
    \end{equation}
    Now, we follow the same line of arguments as in Step 1, only here we construct the vector $y$ as
    \begin{equation}
        y_i =
        \begin{cases}
            \alpha_2    & \text{if } i \in \B\\
            -1  & \text{if } i \not\in \B,
        \end{cases}
    \end{equation}
    and replace $z(t)$ in \eqref{arg1} by $\alpha_2$. Then, we write \eqref{zgammaB} for $\alpha_2$ in place of $z(t)$ by employing \eqref{alpha2magic2} in place of  \eqref{alphagamma4}. These modifications of \eqref{arg1} and \eqref{zgammaB} together imply $x_i(t+1) \geq \alpha_2$, $\forall i \in \B$, which proves statement \eqref{z3}.
    
    {\bf Step 3:} Combining Steps~1 and 2, from \eqref{z1}, \eqref{z2} and \eqref{z3}, we conclude that \eqref{pp1eq} holds unless $z(t)$ is non-decreasing for every $t \geq t_0$ and $\lim_{t\rightarrow\infty} z(t)$ exists and lies in the interval $(\alpha_1,\alpha_2)$. Thus, assume on the contrary that $z(t)$ is non-decreasing for $t \geq t_0$ and
    \begin{align}
    \label{zlim}
        \lim_{t\rightarrow\infty} z(t)=z^*\in (\alpha_1,\alpha_2).
    \end{align}
    Let $\epsilon > 0$ be sufficiently small that it satisfies
    \begin{align}
    \label{small espsilon}
        \gamma_\B > \frac{\hat{d}_i(z^*-\epsilon)(1+z^*+\epsilon)}{(z^*-\epsilon)(1-(z^*+\epsilon))},
    \end{align}
    for any $i \in \V$. One notices that \eqref{small espsilon} holds for any sufficiently small $\epsilon$ since, as $\epsilon$ vanishes, the right-hand expression of \eqref{small espsilon} converges to $\hat{d}_i(z^*)(1+z^*)/[z^*(1-z^*)]$, which is less than $\gamma_\B$. According to \eqref{zlim}, there exists $T > t_0$ such that
    \begin{align}
        z(t) > z^*-\epsilon,~\forall t > T.
    \end{align}
    
    {\bf Step 4:} Let $i \in \B$ be arbitrary. We show that there is a time instant $t_i > T$ such that
    \begin{align}
    \label{t_i}
        x_i(t_i) > z^*+\epsilon.
    \end{align}
    Assume to the contrary that $x_i(t)$ never exceeds $z^*+\epsilon$ after time $T$, that is $x_i(t) \leq z^* + \epsilon$ for any $t > T$. Now, for any $t > T$, given the PC dynamics we have
    \begin{align}
    \label{znew4}
        x_i(t+1)-x_i(t)
        & = \sum_{j \in \V} d_i(x_i(t),x_j(t))w_{ij}(t)|x_j(t)|(\sgn(x_j(t))-x_i(t))\cr
        & \geq \left(\sum_{j \in \B} w_{ij}(t)\right)(z^*-\epsilon)(1-(z^*+\epsilon))\cr
        &+ \hat{d}_i(z^*-\epsilon)\left(\sum_{j \in \V\backslash\B} w_{ij}(t)\right)(-1-(z^*+\epsilon))\cr
        &\geq \left(\sum_{j \in \B}w_{ij}(t)\right) \left[ (z^*-\epsilon)(1-(z^*+\epsilon)) + \frac{\hat{d}_i(z^*-\epsilon)}{\gamma_\B} (-1-(z^*+\epsilon)) \right].
    \end{align}
    Summing up \eqref{znew4} over consecutive time instants, we conclude that
    \begin{align}
    \label{znew4'}
        x_i(t')-x_i(t) \geq \left(\sum_{\tau=t}^{t'-1}\sum_{j \in \B}w_{ij}(\tau)\right) \left[ (z^*-\epsilon)(1-(z^*+\epsilon)) + \frac{\hat{d}_i(z^*-\epsilon)}{\gamma_\B} (-1-(z^*+\epsilon)) \right].
    \end{align}
    The right-hand expression in \eqref{znew4'} explodes as $t'$ grows since
    \begin{align}
        \left[ (z^*-\epsilon)(1-(z^*+\epsilon)) + \frac{\hat{d}_i(z^*-\epsilon)}{\gamma_\B} (-1-(z^*+\epsilon)) \right]
    \end{align}
    is lower-bounded by a positive number according to \eqref{small espsilon} and
    \begin{align}
        \sum_{\tau=t}^{t'-1}\sum_{j \in \B}w_{ij}(\tau)
    \end{align}
    grows unbounded as $t'\rightarrow \infty$ since $\B$ is connected. This is a contradiction, meaning that there is $t_i > T$ for which \eqref{t_i} holds.
    
    {\bf Step 5:} Let $i \in \B$ be arbitrary. We show that if $t > T$,
    \begin{equation}
    \label{znew1}
        \Big(x_i(t) \geq z^*+\epsilon\Big) ~~\Longrightarrow~~ \Big(x_i(t+1) \geq z^*+\epsilon\Big).
    \end{equation}
    According to Lemma~\ref{f dyn pc alt}, $x_i(t+1) \geq f_i(x(t))$. For the arbitrary but fixed $i$, we construct the vector $y$ as
    \begin{equation}
        y_j =
        \begin{cases}
            z^*+\epsilon    & \text{if } j=i\\
            z^*-\epsilon & \text{if } j \in \B,~j \neq i\\
            -1  & \text{if } i \not\in \B.
        \end{cases}
    \end{equation}
    Since $x(t) \geq y$ and $f$ is non-decreasing according to Lemma~\ref{increasing_model}, $f_i(x(t)) \geq f_i(y)$, and consequently, $x_i(t+1) \geq f_i(y)$. Thus, it is sufficient to show that $f_i(y) \geq z^*+\alpha$. Hence, we write
    \begin{align}
    \label{znew2}
        f_i(y)
        & = y_i + \sum_{j \in \V} d'_i(y_i,y_j)w_{ij}(t)|y_j|(\sgn(y_j)-y_i)\cr
        & = z^*+\epsilon
        + \left(\sum_{j \in \B} w_{ij}(t)\right)(z^*-\epsilon)(1-(z^*+\epsilon))\cr
        &+\hat{d}_i(z^*+\epsilon) \left(\sum_{j \in \V\backslash\B} w_{ij}(t)\right)(-1-(z^*+\epsilon))\cr
        & \geq z^* +\epsilon + \left(\sum_{j \in \B} w_{ij}(t)\right)\cr &\times\left[ (z^*-\epsilon)(1-(z^*+\epsilon)) + \frac{\hat{d}_i(z^*+\epsilon)}{\gamma_\B} (-1-(z^*+\epsilon)) \right]\cr
        & \geq z^*+\epsilon,
    \end{align}
    where the last inequality in \eqref{znew2} is a result of
    \begin{equation}
        \left[ (z^*-\epsilon)(1-(z^*+\epsilon)) + \frac{\hat{d}_i(z^*+\epsilon)}{\gamma_\B} (-1-(z^*+\epsilon)) \right] > 0,
    \end{equation} which itself is implied from \eqref{small espsilon} considering $\hat{d}_i(z^*-\epsilon)\leq \hat{d}_i(z^*-\epsilon)$ according to Lemma~\ref{increasing_model}.
    
    {\bf Step 6:} Combining Steps~4 and 5, we conclude that for each $i \in \B$, there is a time $t_i$ such that $x_i(t) \geq z^*+\epsilon$ for any $t \geq t_i$. Hence, $z(t) \geq z^*+\epsilon$ for any $t \geq \max(t_1,\ldots,t_n)$, which contradicts the assumption $\lim_{t\rightarrow\infty} z(t)=z^*$ made in Step~3, completing the proof.\qed

\subsection{Derivations of Propositions~\ref{prop:polarization_result}, \ref{polarization_result}, \ref{polarization_result_Second_model}, and \ref{semi-general PC result}}\label{rest of proofs}

    In this subsection, starting from Proposition~\ref{prop:polarization_result}, we demonstrate that each proposition stated in this paper can be derived from the one coming next, while the last proposition, that is Proposition~\ref{semi-general PC result}, is a result of Theorem~\ref{general PC result} proved previously.

    Condition \eqref{prop 1 condition} in Proposition~\ref{prop:polarization_result} indicates that $\B$ has an infinite bubble number. Thus, assuming that Proposition~\ref{polarization_result} is true, in view of \eqref{alphagamma}, we obtain $\alpha_1 = 0$ and $\alpha_2 = 1$, immediately resulting in Proposition~\ref{prop:polarization_result}. Setting $d=1$ in Proposition~\ref{polarization_result_Second_model} simply converts it into Proposition~\ref{polarization_result}. Proposition~\ref{polarization_result_Second_model} is a special case of Proposition~\ref{semi-general PC result} where $\beta \rightarrow 0$. Thus, it only remains to derive Proposition~\ref{semi-general PC result} from Theorem~\ref{general PC result}.

    The assumption in Proposition~\ref{semi-general PC result} that \eqref{alphagamma3} has two positive solutions $\alpha_1$ and $\alpha_2$ in $(0,1)$ means that for any $\alpha \in (\alpha_1,\alpha_2)$ we have
    \begin{equation}
        \frac{1+\alpha}{\alpha(1-\alpha)} < \frac{\gamma_\B}{1-(1-d)\alpha^{\beta}}.
    \end{equation}
    Thus, setting $\hat{d}_i(\alpha) = 1-(1-d)\alpha^\beta$, which is a non-increasing function in $(0,1)$, in Theorem~\ref{general PC result} immediately leads to the statement of Proposition~\ref{semi-general PC result}.

\bibliographystyle{IEEEtran}
\bibliography{references}

\end{document}